\documentclass[11pt]{article}
\setlength{\topmargin}{-1cm}
\setlength{\textwidth}{17cm}  
\setlength{\oddsidemargin}{-0.5cm} 
\setlength{\textheight}{23cm} 
\usepackage[dvips]{graphicx}  
\usepackage{amsmath} 
\usepackage{tabularx}
\usepackage{setspace}   
   
\newcommand{\ba}{\begin{align}}  
\newcommand{\ea}{\end{align}}

\usepackage{amsfonts}   
\usepackage{amssymb} 
\usepackage{amsthm}  

\addtolength{\evensidemargin}{-0.02\textwidth} 
\addtolength{\oddsidemargin}{-0.02\textwidth}
\addtolength{\textwidth}{0.04\textwidth}

\newtheorem{thm}{Theorem}[section]

\newtheorem{lemma}[thm]{Lemma}

\newtheorem{rmk}[thm]{Remark}
\numberwithin{equation}{section}

\def\be{\begin{eqnarray}} 
\def\ee{\end{eqnarray}}
\def\bee{\begin{eqnarray*}}
\def\eee{\end{eqnarray*}}
\def\bmx{\begin{pmatrix}}
\def\emx{\end{pmatrix}}

\def\ts{\textstyle}

\def\rt2{\ts \frac{1}{\sqrt{2}} }

\newcommand{\N}{\mathbb{N}}
\newcommand{\Z}{\mathbb{Z}}

\newcommand{\R}{\mathbb{R}}
\newcommand{\C}{\mathbb{C}}

\DeclareMathOperator{\trace}{Tr}

\DeclareMathOperator*{\Ex}{\mathbb{E}}

\title{Revisiting additivity violation of quantum channels}     

\author{Motohisa Fukuda
\footnote{e-mail: m.fukuda@tum.de,  address: Zentrum Mathematik, M5,
Technische Universit\"at M\"unchen, 
Boltzmannstrasse 3,
85748 Garching, Germany}  
}

\begin{document}  

\maketitle

\begin{abstract}
We prove additivity violation  of minimum output entropy of quantum channels  by straightforward application of $\epsilon$-net argument and L\'evy's lemma. 
The additivity conjecture was disproved initially by Hastings. Later, a proof via asymptotic geometric analysis was presented by Aubrun, Szarek and Werner,
which uses Dudley's bound on Gaussian process (or Dvoretzky's theorem with Schechtman's improvement). 
In this paper, we develop another proof along Dvoretzky's theorem in Milman's view  
showing additivity violation in broader regimes than the existing proofs. 
Importantly, 
Dvoretzky's theorem  works well with norms to give strong statements but 
these techniques can be extended to functions which have norm-like structures
- positive homogeneity  and triangle inequality.  
Then,  a connection between Hastings' method and ours is also discussed. 
Besides, we  make some comments on relations between regularized minimum output entropy and classical capacity of quantum channels. 
\end{abstract}

\section{Preliminary}    
\subsection{Introduction} 
Existence of quantum channels which show
additivity violation of minimum output entropy was proven by Hastings \cite{Hastings2009}, which is stated as follows. 
There exists a quantum channel $\Phi$ such that,  denoting  the complex conjugate of $\Phi$ by $\bar \Phi$,
\be\label{eq:violation}
S_{\min} (\Phi \otimes \bar \Phi) < S_{\min} (\Phi) + S_{\min} (\bar \Phi)
\ee
Here, $S_{\min}(\cdot)$ is the \emph{minimum output entropy} of quantum channel, which  is defined as 
\be
S_{\min} (\Phi) = \min_{\rho} S(\Phi(\rho))
\ee
where $S(\cdot)$ is the \emph{von Neumann entropy} and $\rho$ runs over all the \emph{pure states} which are rank-one projections;
general quantum states (\emph{mixed states}) are written by positive Hermitian matrices of trace one but
we can assume that input states are pure states because the function $S(\cdot)$ is concave.  
Historically, the additivity question was made in \cite{KingRuskai2001}. 
Note that $\leq$ is obvious  in \eqref{eq:violation}.
 
One of  consequences of the additivity violation of minimum output entropy is that  \emph{Holevo capacity} is not in general additive either.  
Holevo capacity $\chi(\cdot)$ is defined as
\be\label{Holevo-cap} 
\chi(\Phi) = \max_{\{p_i,\rho_i\}} \left [S\left( \Phi\left(\sum_i p_i\rho_i \right)\right) - \sum_i p_i S\left(\Phi\left( \rho_i \right)\right)  \right]
\ee
Here, $\{p_i,\rho_i\}$ runs over all possible ensembles, where probabilities $\{p_i\}$ are assigned to quantum states $\{\rho_i\}$
\cite{SchumacherWestmoreland1997,Holevo1998}. 
One form of additivity violation of Holevo capacity can be stated as 
\be
\chi\left(\Phi^{\otimes 2}\right)   > 2 \chi(\Phi)
\ee
This is deduced  by getting the additivity violation of minimum output entropy for two identical channels from \eqref{eq:violation} 
via the the result in \cite{FW2007}, and 
using the equivalence relation between the two additivity or non-additivity properties \cite{Shor2004}. 
(The latter technique is extended in Section \ref{sec:reg} to show a similar statement for regularized quantities.) 
This, in turn, implies that
\be\label{cc}
C(\Phi) = \lim_{n \to \infty} \frac 1n \chi\left( \Phi^{\otimes n}\right)  =  \lim_{m \to \infty} \frac 1{2m} \chi\left( \Phi^{\otimes 2m}\right) 
\geq \frac 12 \chi\left( \Phi^{\otimes 2}\right)  > \chi (\Phi) 
\ee
Here, $C(\cdot)$ is the classical capacity and this operational quantity is defined in an asymptotic form as in the first equality.   
On the other hand, Holevo capacity is written in a one-letter formula as in \eqref{Holevo-cap} and it gives the classical capacity when we do not use entangled inputs
\cite{SchumacherWestmoreland1997,Holevo1998}. 
These two quantities had been conjectured to be identical but now we know that they are different in general as one can see in \eqref{cc}.
Therefore, this in particular implies that entanglement inputs can increase the classical capacity of some channels. 
We refer interested readers to \cite{Holevo2006}.

Soon after the Hasting's paper \cite{Hastings2009} was publicized in 2008, several papers followed to give rigorous proofs and generalize the result
\cite{FKM2010, BrandaoHorodecki2010, FK2010}. 
Moreover, in 2010, Aubrun, Szarek and Werner found another proof in \cite{AubrunSzarekWerner2011} 
via the Dudley's bound on Gaussian process \cite{Dudley1967,JainMarcus1978}
(or Dvoretzky's theorem with Schechtman's improvement \cite{Schechtman1989}) .
The original Dvoretzky's theorem can be found in \cite{Dvoretzky1961}.
In fact, a  year before, they proved in \cite{AubrunSzarekWerner2010} additivity violation of $p$-Renyi entropy for $p>1$ 
via Dvoretzky's theorem in Milman's version \cite{Milman1971,FigielMilman1977}, but
it was not strong enough to prove additivity violation of minimum output entropy as was written in \cite{AubrunSzarekWerner2011}.
(The additivity violation of $p$-Renyi entropy for $p>1$ itself was first proven by Hayden and Winter in 2007 \cite{HaydenWinter2008},
and later by Collins and Nechita \cite{CollinsNechita2011} via  \emph{free probability}.)
Also, Additivity violation for $p$ close to $0$ was proven in \cite{CHLMW2008}. 
Note that our problem corresponds to the case $p=1$. 
Interestingly, no concrete counterexample has  been found yet for $p=1$ whereas
a counterexample for $p>2$ was explicitly constructed in \cite{GHP2010}, many years after
the counterexample for $p>4.79$ was found in \cite{WernerHolevo2002}.
Also, we must mention a recent paper \cite{BCN2013} where they proved a rather large additivity violation and the smallest output dimension could be as small as 183 while the dimensions of input and environment are infinite. Their method is based on free probability.

In this paper, we show that additivity violation of minimum output entropy can be proven 
by the standard method via \emph{$\epsilon$-net argument} and L\'evy's Lemma, which in fact is very similar to the Milman's view on Dvoretzky's theorem.
Interestingly, this pair of techniques was used in \cite{HLW2006} to show existence of strongly entangled subspaces, which finally
lead to the additivity violation of Ranyi entropy for $p>1$ \cite{HaydenWinter2008}.
However, their estimate was not strong enough to prove the additivity violation of minimum output entropy. 
On the other hand, our new approach gives an improved estimate which makes it possible. 
Historically, 
approximating the von Neumann entropy by using the Hilbert-Schmidt distance from the maximally mixed state, which was introduced in \cite{BrandaoHorodecki2010}
(perhaps originally from \cite{Hastings2009} via Taylor expansion),
 fitted into asymptotic geometric analysis argument  in \cite{AubrunSzarekWerner2011}. 
Moreover we suggest that its norm-like structures - almost positive homogeneity and triangle inequality - actually put our problem into 
the framework of the Milman's view. 
The technical discussion on this issue is written in Section \ref{sec:tech} after stating additivity violation in Section \ref{sec:main}.
One can see that our result is stronger than all the existing proofs \cite{Hastings2009,FKM2010, BrandaoHorodecki2010, FK2010, AubrunSzarekWerner2011}
 in the sense that we can prove the additivity violation asymptotically as long as the dimensions of input and output are proportional to each other
and proportionally larger than or equal to square of the dimension of environment; there is no restriction on the ratios. 
We make some analysis on our method and compare  it to Hastings'  in Section \ref{sec:Hastings}.
Our proof method can be applied to random unitary channels, which is briefly studied in Section  \ref{sec:RU}.

Besides, there is an open problem:   
\be\label{open} 
C(\Phi \otimes \Omega) \stackrel{?}{=} C(\Phi) + C(\Omega)  
\ee
for different channels $\Phi$ and $\Omega$. 
In Section \ref{sec:reg}, we provide a proof with the widely-known fact that 
additivity violation of regularized minimum output entropy implies that of classical capacity.  

\subsection{Channel}
A (quantum) state is a positive Hermitian operator of trace one, and a (quantum) channel is a completely positive and trace-preserving map on the states. 
We denote the set of unit vectors in  $\C^d$ by  $S_{\C^d}$ and
the linear maps on $\C^d$ by  $ L(\C^d)$.
Also, we denote the dual of vector $x \in \C^d$ by $x^*$, where,
in the \emph{bra-ket} notation, $x = |x\rangle$ and $x^* = \langle x|$,
which we don't use in this paper. 

In Stinespring's picture \cite{Stinespring1955}, channels are identified as isometries:
\be
V: \C^l \to \C^k \otimes \C^n 
\ee
and channels are written for $x \in S_{\C^l}$ as
\be
\Phi: L\left(\C^l\right)&\to& L \left(\C^k\right) \\ 
xx* &\mapsto& \trace_{\C^n}  [Vxx^*V^*]
\label{channel}
\ee
Moreover, through this embedding picture,  
we can identify quantum channels as $l$-dimensional subspaces $E \subseteq \C^k \otimes \C^n$ such that 
\be
\Phi_E : L(E) &\to& L\left(\C^k\right)  \\
xx^* &\mapsto&  \trace_{\C^n}  [xx^*]=XX^*
\label{channel2}
\ee
for  $x \in \tilde E = E \cap S_{\C^k \otimes \C^n}$. 
Here,  partial trace is understood by the following identification between linear spaces: 
\be
\C^k \otimes \C^n &=& \mathcal M_{k,n}(\C) \\
x &=& X \label{vec-mat}
\ee
In what follows, we use the lower and upper cases of the same letter to represent this identification in \eqref{vec-mat}. 
Importantly, by applying Schmidt decomposition to $x \in \C^k \otimes \C^n$ we know that
 $\trace_{\C^n}  [xx^*]$ and $\trace_{\C^k}  [xx^*]$ share the same non-zero eigenvalues.
So, we always assume safely that $k \leq n$. 
In this case, we say that the dimensions of input, output and environment are $l$, $k$ and $n$, respectively,
although the spaces of environment and output are interchangeable for the additivity problem of minimum output entropy \cite{Holevo2005,KMNR2007}.   

We give some definitions here. 
To define the complex conjugate of channel $\Phi_E$, 
 we fix some isometry $V$  in \eqref{channel} such that its image is $E$ and then
define the channel $\bar \Phi_E$ by $\bar V$. 
This definition is unique only up to rotations, but this does not cause a problem in our paper. 
Since we identify channels as subspaces we define random quantum channels as follows. 
Fix an $l$-dimensional subspace $E_0$ in $\C^k \otimes \C^n$ and generate random subspaces 
$UE_0$ with $U \in \mathcal U(kn)$ where $U$ is chosen randomly according to the Haar measure on the unitary group.

\section{Additivity violation} \label{sec:main}

First, the canonical Bell state on $\C^d \otimes \C^d$ is defined as  
\be
b_d = \frac 1 {\sqrt d} \sum_{i=1}^d e_i \otimes e_i 
\ee
where $\{e_i\}$ is the canonical basis in $\C^d$.
The Bell state flips matrices with transpose and in particular: 
\be\label{Bell-p}
\left(U \otimes \bar U\right) b_d = \left(U \left( \bar U \right)^T \otimes I \right) b_d   = b_d 
\ee
for $U \in \mathcal U(d)$.  
This property ensures a rather large eigenvalue of $(\Phi_E \otimes \bar \Phi_E) (b_lb_l^*)$.
This idea is originated from \cite{HaydenWinter2008} where $l$ divides $kn$ but 
their proof is easily adapted to any $l \leq kn$, which is written below.
(A proof on this property through \emph{graphical calculus} was given in \cite{CollinsNechita2010}, in which
 the exact limit eigenvalue distribution of $(\Phi \otimes \bar \Phi) (b_lb_l^*) $ was also calculated with random isometry in the picture of \eqref{channel}.)
This, in turn, implies the following lemma.
\begin{lemma}\label{Bell}
For any channel $\Phi_E$, let $l = an$ with $a >0$ and we have    
\be
S_{\min} \left(\Phi_E \otimes \bar \Phi_E \right) \leq  2 \log k - \frac {a \log k} k + \frac {2a} k   
\ee  
for large enough $k$. 
\end{lemma}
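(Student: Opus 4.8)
\medskip\noindent\textbf{Proof plan.}
The plan is to bound the left-hand side by the output entropy of a single well-chosen input. Since $S_{\min}$ is a minimum over pure inputs, it is enough to feed $\Phi_E\otimes\bar\Phi_E$ the maximally entangled state $b_lb_l^*$ on $\C^l\otimes\C^l$ and to show that $M:=(\Phi_E\otimes\bar\Phi_E)(b_lb_l^*)$ has a single eigenvalue of size at least $a/k$; this is precisely where the transpose-flip property \eqref{Bell-p} of the Bell state is used, and the claimed bound then drops out of the presence of such a large eigenvalue among the at most $k^2$ eigenvalues of $M$.

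Concretely, fix an isometry $V:\C^l\to\C^k\otimes\C^n$ with image $E$ as in \eqref{channel}, put $v_i=Ve_i$ (an orthonormal family in $\tilde E$) and, under the identification \eqref{vec-mat}, write $v_i=V_i\in\M_{k,n}(\C)$, so that $\trace_{\C^n}[v_iv_j^*]=V_iV_j^*$ and $\trace_{\C^n}[\bar v_i\bar v_j^*]=\overline{V_iV_j^*}$. Writing $b_lb_l^*=\frac1l\sum_{i,j}(e_ie_j^*)\otimes(e_ie_j^*)$, applying $V\otimes\bar V$ and tracing out the two copies of $\C^n$ gives
\[
M=\frac1l\sum_{i,j}\,(V_iV_j^*)\otimes\overline{(V_iV_j^*)}\,.
\]
The particular representative $V$ is immaterial: replacing $V$ by $VW$ with $W\in\U(l)$ replaces $(V\otimes\bar V)b_l$ by $(W\otimes\bar W)b_l=b_l$ via \eqref{Bell-p}, so $M$, and hence the bound, is independent of the chosen $\bar\Phi_E$.

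To bound $\lambda_{\max}(M)$ from below, test $M$ against the \emph{output} Bell vector $b_k\in\C^k\otimes\C^k$: a one-line computation gives $\langle b_k,(A\otimes\bar A)b_k\rangle=\frac1k\trace(AA^*)$ for every $A\in\M_{k,k}(\C)$, and therefore
\[
\lambda_{\max}(M)\ \ge\ \langle b_k,M b_k\rangle\ =\ \frac1{lk}\sum_{i,j}\trace\big(V_iV_j^*V_jV_i^*\big)\ =\ \frac1{lk}\,\trace\big(Q^2\big),\qquad Q:=\sum_i V_i^*V_i .
\]
The matrix $Q$ is positive semidefinite and $n\times n$, with $\trace Q=\sum_i\trace(V_i^*V_i)=\sum_i\|v_i\|^2=l$; Cauchy--Schwarz in the form $\trace(Q^2)\ge(\trace Q)^2/n$ then gives $\trace(Q^2)\ge l^2/n$, hence $\lambda_{\max}(M)\ge l/(kn)=a/k$.

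Finally, given one eigenvalue $\mu_1\ge a/k$ of the $k^2$-dimensional state $M$, the von Neumann entropy $S(M)$ is largest when the remaining mass is spread uniformly, so $S(M)\le H(\mu_1)+(1-\mu_1)\log(k^2-1)$ with $H(x)=-x\log x-(1-x)\log(1-x)$. For $k$ large this upper bound is decreasing in $\mu_1$ on $[a/k,1]$ (its $\mu_1$-derivative is $\log\frac{1-\mu_1}{\mu_1(k^2-1)}<0$ once $\mu_1>1/k^2$), so it is at most $H(a/k)+(1-a/k)\log(k^2-1)$; using $\log(k^2-1)\le2\log k$, the identity $-\frac ak\log\frac ak=\frac{a\log k}{k}-\frac{a\log a}{k}$ and the elementary bound $-(1-\frac ak)\log(1-\frac ak)\le\frac ak$, one arrives at $S(M)\le2\log k-\frac{a\log k}{k}+\frac{2a}{k}$ for $k$ large enough. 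The one genuinely delicate point is the eigenvalue estimate: the Bell-overlap identity $\langle b_k,(A\otimes\bar A)b_k\rangle=\frac1k\trace(AA^*)$ and the reduction to $\trace(Q^2)$ must be carried out while keeping the Stinespring picture \eqref{channel} and the vector--matrix dictionary \eqref{vec-mat} consistent (in particular that conjugating a vector corresponds to conjugating its matrix entrywise, so that $\trace_{\C^n}[\bar v_i\bar v_j^*]=\overline{V_iV_j^*}$); the Cauchy--Schwarz step and the entropy bookkeeping afterwards are routine.
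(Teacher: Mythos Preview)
Your proof is correct and follows the same overall strategy as the paper: feed the maximally entangled state $b_l$ into $\Phi_E\otimes\bar\Phi_E$, show that the output $M$ has overlap at least $l/(kn)=a/k$ with $b_k$, and then maximize the entropy subject to this eigenvalue constraint. The entropy bookkeeping at the end is essentially identical to the paper's.

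The one place where you diverge is in how the key overlap $\langle b_k, M b_k\rangle$ is bounded. The paper observes that $M$ is obtained by tracing out $\C^n\otimes\C^n$ from the pure state $(V\otimes\bar V)b_l$, so for any unit vector $\chi$ in that environment one has $\langle b_k,Mb_k\rangle\ge |\langle b_k\otimes \chi,(V\otimes\bar V)b_l\rangle|^2$; choosing $\chi=b_n$ and applying the transpose identity $(I\otimes A)\sum_i e_i\otimes e_i=(A^T\otimes I)\sum_j f_j\otimes f_j$ yields the value $l/(kn)$ directly. You instead compute $\langle b_k,Mb_k\rangle$ \emph{exactly} via the identity $\langle b_k,(A\otimes\bar A)b_k\rangle=\frac1k\trace(AA^*)$, obtaining $\frac{1}{lk}\trace(Q^2)$ with $Q=\sum_i V_i^*V_i$, and then recover the same $l/(kn)$ from Cauchy--Schwarz $\trace(Q^2)\ge(\trace Q)^2/n$. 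Your route is a little more hands-on but has the pleasant feature of giving an equality rather than an inequality at the first step (the Cauchy--Schwarz step is tight precisely when $Q$ is a scalar multiple of $I_n$), so in principle it could be sharpened for specific subspaces; the paper's route is shorter and makes the role of the Bell transpose trick \eqref{Bell-p} more transparent.
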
  
\begin{proof} 
Firstly, we get a lower bound for the largest eigenvalue of 
\be
(\Phi_E \otimes \bar \Phi_E) (b_lb_l^*) = \trace_{\C^n \otimes \C^n} \left[ \left(V \otimes \bar V\right)b_l b_l^* \left(V^* \otimes V^T \right) \right] 
\ee
by projecting them to the one-dimensional subspace of the Bell state as in \cite{HaydenWinter2008}:
\be
b_k^*(\Phi_E \otimes \bar \Phi_E) (b_lb_l^*)  b_k  
\geq \left| b_l^* \left(V^* \otimes  V^T \right) \left( b_k\otimes  b_n\right)  \right|^2
=  \frac{l}{kn}  \left| b_l^* b_l \right|^2 = \frac{l}{kn}
\ee
Indeed, for $l \times d$ matrix $A$,
\be
(I_d \otimes A) \sum_{i=1}^d e_i \otimes e_i  = \sum_{i=1}^d e_i \otimes \left(\sum_{j=1}^l A_{j,i} f_j \right) 
=\sum_{j=1}^l  \left(\sum_{i=1}^dA_{j,i} e_i \right) \otimes  f_j = \left(A^T \otimes I_l \right) \sum_{j=1}^l  f_j \otimes  f_j
\ee
where $\{f_j\}_{j=1}^l$ is the canonical basis in $\C^l$.

Secondly, the bound in the statement of theorem is derived from the largest possible entropy under this constraint.   
\be
-\frac{a}{k} \log \left( \frac a k \right) - \left( 1-\frac a k \right) \log \left( \left( 1 - \frac a k \right)  \frac{1}{k^2-1} \right)   \notag
&\leq & \frac a k \log k -\frac a k \log a + \left(  1 - \frac a k  \right) \left [ 2 \log k - \log \left( 1- \frac a k  \right)    \right]  \\
&\leq&2 \log k - \frac {a \log k} k + \frac a k  \left[2 - \log a- \frac {2a} k \right]
\ee
for large enough $k$ so that the bound $\log(1-\frac a k) \geq - \frac{2a} k$ holds. 
\end{proof}

Next, 
the following approximating bound of the von Neumann entropy around the maximally mixed state was introduced in \cite{BrandaoHorodecki2010}. 
\begin{lemma} \label{lemma:approx}
For any state $\rho$ on $\C^k$, 
\be 
\log k - S(\rho)  \leq  k \cdot \left\| \rho - \tilde I_k \right\|_2^2
\ee
where $\tilde I_k = I_k/k$, the identity on $\C^k$ normalized to be trace-one.
\end{lemma}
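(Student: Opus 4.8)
The plan is to diagonalize $\rho$ and recognize the left-hand side as a classical relative entropy, then dominate it by a chi-squared divergence through the elementary inequality $\log x \leq x-1$. Write $\rho$ in its spectral decomposition with eigenvalues $\lambda_1,\dots,\lambda_k \geq 0$, so that $\sum_i \lambda_i = 1$. Since $\rho - \tilde I_k$ is Hermitian and diagonal in the eigenbasis of $\rho$, one has $\| \rho - \tilde I_k \|_2^2 = \sum_{i=1}^k (\lambda_i - 1/k)^2$, while
\[
\log k - S(\rho) \;=\; \log k + \sum_{i=1}^k \lambda_i \log \lambda_i \;=\; \sum_{i=1}^k \lambda_i \log (k\lambda_i),
\]
which is exactly the relative entropy between the spectrum $(\lambda_1,\dots,\lambda_k)$ of $\rho$ and the uniform distribution on $k$ points. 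So the claim becomes the statement that this relative entropy is bounded by $k\sum_i(\lambda_i - 1/k)^2$, i.e.\ by the chi-squared divergence from the uniform distribution.

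To establish that, I would apply $\log x \leq x-1$ with $x = k\lambda_i$, giving $\lambda_i\log(k\lambda_i) \leq \lambda_i(k\lambda_i - 1) = k\lambda_i^2 - \lambda_i$ for each $i$. Summing over $i$ and invoking $\sum_i \lambda_i = 1$ (once to collapse the linear term, once more when expanding the square),
\[
\sum_{i=1}^k \lambda_i \log(k\lambda_i) \;\leq\; \sum_{i=1}^k\big(k\lambda_i^2 - \lambda_i\big) \;=\; k\sum_{i=1}^k \lambda_i^2 - 1 \;=\; k\sum_{i=1}^k\Big(\lambda_i - \tfrac1k\Big)^2 \;=\; k\,\| \rho - \tilde I_k \|_2^2 ,
\]
which is the desired bound.

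There is no real obstacle here; the only point worth flagging is that the inequality is genuinely global — the term-by-term estimate $\lambda_i\log(k\lambda_i)\leq k(\lambda_i - 1/k)^2$ fails precisely when $\lambda_i > 1/k$ — so one must retain the normalization $\sum_i\lambda_i=1$ when passing from $\log x \leq x-1$ to the final line. Equivalently, the whole computation is the familiar fact that relative entropy is dominated by the chi-squared divergence, $D(p\|q)\leq\chi^2(p\|q)$ for probability vectors $p,q$, specialized to $q$ uniform.
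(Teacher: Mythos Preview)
Your proof is correct. The paper does not spell out an argument here --- it merely remarks that the bound ``can be seen easily from the concavity of $S(\cdot)$ in particular around the maximally mixed state'' and cites \cite{BrandaoHorodecki2010} --- whereas you give the standard explicit computation: identify $\log k - S(\rho)$ with the relative entropy of the spectrum against the uniform distribution and bound it by the chi-squared divergence via $\log x \le x-1$. Your observation that the term-by-term inequality $\lambda_i\log(k\lambda_i)\le k(\lambda_i-1/k)^2$ can fail, so that the normalization $\sum_i\lambda_i=1$ must be used globally, is a nice clarification.
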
 
The bound can be seen easily from the concavity of $S(\cdot)$ in particular around the maximally mixed state. 
This idea extremely fits into asymptotic geometric analysis as was pointed out in \cite{AubrunSzarekWerner2011}. 
Also, it fits even better to our method because the function in \eqref{function:f}, which is made out of Lemma \ref{lemma:approx},  almost shows
positive homogeneity and triangle inequality. 

We are now ready to state the main theorem:
\begin{thm}[Main theorem]\label{thm:main}
Suppose  $l =an$ and $k^2 = \beta n$ for $a,\beta>0$, i.e., $l ,n\sim k^2$. Then,
 we observe additivity violation of minimum output entropy when $k$ is large enough.  
Moreover, the statement holds even if $\beta \to 0$, i.e. in the regime where $l \sim n$ and $k^2 = O(n)$. 
\end{thm}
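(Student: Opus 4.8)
The plan is to set the deterministic tensor estimate of Lemma \ref{Bell} against a lower bound on the minimum output entropy of a single generic channel, the latter being where Milman's viewpoint does the work. Since complex conjugation preserves entropy, $S_{\min}(\bar\Phi_E)=S_{\min}(\Phi_E)$, so Lemma \ref{Bell} (applied with the given $a$) reduces the additivity violation \eqref{eq:violation} to producing one $l$-dimensional subspace $E\subseteq\C^k\otimes\C^n$ for which
\be
\log k - S_{\min}(\Phi_E) \;<\; \frac{a}{k}\left(\frac{\log k}{2}-1\right),
\ee
and I would obtain such an $E$ by showing that a Haar-random subspace satisfies this with probability $1-o(1)$, so that in fact the violating channels are generic.

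By Lemma \ref{lemma:approx} it is enough to prove
\be
\sup_{x\in\tilde E}\,\bigl\|XX^{*}-\tilde I_k\bigr\|_2^{\,2}\;<\;\frac{a}{k^{2}}\left(\frac{\log k}{2}-1\right).
\ee
The thing to keep in mind throughout is that for a single Haar-random unit vector $\|XX^{*}-\tilde I_k\|_2^{\,2}$ has expectation $\tfrac{k^{2}-1}{k(kn+1)}\sim 1/n\sim\beta/k^{2}$; thus one must control the \emph{supremum} of this quantity over all of $\tilde E$ and keep it within a factor of order $\log k$ of its typical value, and it is exactly the $\log k$ present in the tensor defect of Lemma \ref{Bell} that supplies the room to do so.

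The core of the argument is the $\epsilon$-net/L\'evy's lemma machinery in the form that works for functions with norm-like structure, as in Milman's proof of Dvoretzky's theorem. The function $x\mapsto\|XX^{*}-\tilde I_k\|_2$, extended off the sphere, is almost positively homogeneous and almost subadditive — the failure of the triangle inequality comes only from cross terms $XY^{*}+YX^{*}$, which on a generic subspace are themselves of the controlled size of the function — so the passage from a net of $\tilde E$ to all of $\tilde E$ costs only a constant multiplicative factor, rather than the additive Lipschitz penalty incurred by the cruder net argument of \cite{HLW2006}. A convenient way to make this concrete is to dualise: writing $P_E$ for the orthogonal projection onto $E$,
\be
\sup_{x\in\tilde E}\,\bigl\|XX^{*}-\tilde I_k\bigr\|_2 \;=\; \sup_{H}\,\lambda_{\max}\!\bigl(P_E(H\otimes I_n)P_E\bigr),
\ee
the supremum over Hermitian $H$ on $\C^k$ with $\trace H=0$ and $\|H\|_2=1$; the functional $H\mapsto\lambda_{\max}(P_E(H\otimes I_n)P_E)$ is genuinely positively homogeneous and subadditive (Weyl's inequality). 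One then $\epsilon$-nets the $O(k^{2})$-dimensional sphere of such $H$, controls $\lambda_{\max}(P_{E_0}U^{*}(H\otimes I_n)UP_{E_0})$ for each fixed $H$ — its mean is of order $\sqrt a/k$ by a short moment computation, and it concentrates around the mean by measure concentration on $\mathcal{U}(kn)$ — and passes to the supremum using sublinearity.

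I expect the one genuinely delicate point to be making this estimate quantitatively sharp. The naive version — using $\|H\|_\infty\le 1$ for the Lipschitz constant in the concentration inequality and union-bounding over a net of size $(C/\delta)^{O(k^{2})}$ — yields a net penalty of order $1/\sqrt k$, which overshoots the target $\sqrt{\log k}/k$ by a polynomial factor; this is precisely the weakness of the earlier $\epsilon$-net estimates. Closing the gap requires exploiting the finer shape of the problem: the matrices $H$ that are relevant effectively have operator norm of order $1/\sqrt k$, which improves the concentration exponent by the corresponding factor, while the few ``spiky'' directions — a unit-Frobenius Hermitian with large operator norm is essentially of low rank — must be covered by a separate, much smaller stratified net. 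Organising this stratification is the technical heart of the proof; the dependence on $\beta$ is then harmless, since enlarging the environment ($\beta\to 0$, i.e.\ $n\gg k^{2}$) only drives the single-channel output closer to $\tilde I_k$, and no lower bound on $\beta$ is ever used. With the supremum under control on an event of probability $1-o(1)$, the first-moment argument concludes, and a channel with the claimed additivity violation exists for all sufficiently large $k$.
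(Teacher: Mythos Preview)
Your reduction via Lemma~\ref{Bell} and Lemma~\ref{lemma:approx} to a supremum bound on $\|XX^*-\tilde I_k\|_2$ is correct, and you correctly isolate the crux: the passage from a net to the whole sphere must cost only a multiplicative constant, not the additive Lipschitz penalty of \cite{HLW2006}. You even name the mechanism --- the cross terms $XY^*+YX^*$ are themselves of the controlled size of the function --- which is exactly the content of Lemma~\ref{lemma:net}. But then you set this aside and dualise to a net over traceless Hermitian $H$, and that detour is where your argument becomes both different from the paper's and incomplete.

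The dualisation is sound, and $H\mapsto\lambda_{\max}(P_E(H\otimes I_n)P_E)$ is genuinely sublinear, so net-to-sup on the $H$-sphere is cheap. The difficulty is the one you flag yourself: the naive Lipschitz bound $\|H\|_\infty\le 1$ together with a net of cardinality $\exp(O(k^2))$ yields only $O(1/\sqrt k)$, and closing the gap forces a stratification over the level sets of $\|H\|_\infty$. That programme can be made to work --- it is, in substance, Schechtman's refinement of Dvoretzky, i.e.\ the route of \cite{AubrunSzarekWerner2011} that the present paper is written to bypass --- but you do not carry it out, and labelling it ``the technical heart of the proof'' while leaving it as a sketch is a genuine gap.

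The paper stays on the primal side throughout, and this is what makes the argument short. One nets the $l$-dimensional input sphere $\tilde E$ (cardinality $\exp(O(l))=\exp(O(an))$), and the net-to-sphere passage is precisely the almost-norm structure you identified: Lemma~\ref{lemma:net} diagonalises $xy^*+yx^*=\alpha\,zz^*+\beta\,ww^*$ with $|\alpha|+|\beta|\le 2$, giving $\max_{\tilde E}f\le(1-\theta^2-2\theta)^{-1}\max_{N_\theta}f$. For concentration the paper never touches $\|H\|_\infty$; instead it observes (Lemma~\ref{lemma:trick8}) that on the median sublevel set $\{f\le\mathrm{med}(f)\}$ one already has $\|X\|_\infty\lesssim 1/\sqrt k$, so the Lipschitz constant of $f$ in $x$ is $\lesssim 1/\sqrt k$ on that set and on its $\varepsilon$-enlargement. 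L\'evy's lemma in the form of Theorem~\ref{Levy} --- which only requires the Lipschitz bound on $A^\varepsilon$, not globally --- then gives deviation $O(1/k)$ with failure probability $\exp(-cn)$, and the union bound over the input net closes once $l/n$ is bounded (Theorem~\ref{exist-subspace}). No stratification, no chaining, and the resulting bound is $C/k$ with $C$ independent of $k$: the $\sqrt{\log k}$ slack you built into your target is never used.
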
 
\begin{proof}
Take $\theta$ and $\epsilon$ as in Theorem \ref{exist-subspace}.
For example, 
set $\theta = 1/4$ and 
\be
\epsilon =  2 \sqrt{ a \log \left( 1 + \frac 2 \theta \right)}
\ee
Then,  Theorem \ref{exist-subspace} implies that  
there exists some constant $C>0$ and subspace $E$ such that   
\be
\max_{x \in \tilde E} \left\|\Phi_E (xx^*) - \tilde I_k \right\|_2 \leq \frac C k
\ee
for sufficiently large $k$. Here, $\tilde E = E \cap S_{\C^k \otimes \C^n}$.
Therefore, by using Lemma \ref{Bell} and Lemma \ref{lemma:approx} we have 
\be\label{eq:last}
S_{\min}\left (\Phi_E \otimes \bar \Phi_E \right) \leq 2 \log k - \frac {a \log k} k + \frac {2a} k 
< 2 \left [  \log k - \frac {C^2} k \right] \leq S_{\min}\left (\Phi_E\right) + S_{\min} \left(\bar \Phi_E\right)  
\ee
for large enough $k$. 
Note that $S_{\min} \left(\bar \Phi_E\right)  =S_{\min} \left( \Phi_E\right)  $. 
\end{proof} 

\begin{rmk}
By using complementarity, we can think of $\mathbb C^n$ as the output space. If we in addition suppose that $l=n$, 
the input and output dimension is $n$ and the environment dimension $k$. 
Then, Theorem \ref{thm:main} implies that the additivity violation happens in the regime where
$n \gtrsim k^2$. 
In particular, if $k$ is fixed to satisfy \eqref{eq:last}, one can take  $n$ large enough to get the additivity violation.  
\end{rmk}

\section{Technical part}\label{sec:tech}

Define a function $f: \C^k \otimes \C^n \to \R$ as
\be\label{function:f}
f(x) = \left\|XX^* - \trace [XX^* ]\tilde I_k \right\|_2  
\ee 
Here, again we use the upper and lower cases to show the identification in \eqref{vec-mat}.
The following lemma describes the typical behavior of $f(x)$ for $x \in S_{\C^k \otimes \C^n}$
when it is chosen uniformly random. 
The result shows  why we need $n \gtrsim k^2$ for additivity violation in our framework for general quantum channels, and
it seems impossible to improve the orders.
\begin{lemma}  \label{med-bound}
For  $x \in S_{\C^k \otimes \C^n}$  uniformly distributed, 
\be
\Ex_x [f]\leq \frac1  {\sqrt{n}}  , \qquad
\mathrm{med}(f) \leq \frac 1 {\sqrt{n}}  \left( 1+ \frac 3 {\sqrt{k}}\right) 
\ee
Here, $\Ex(\cdot)$ and $\mathrm{med}(\cdot)$ are mean and median respectively. 
\\
Comment: for our main theorem, one only needs  $\mathrm{med}(f) \lesssim k^{-1}$ assuming  $n \gtrsim k^2$. 
\end{lemma}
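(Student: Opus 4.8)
\emph{Proof plan.} Since $x$ is a unit vector, $\trace[XX^{*}]=\|x\|^{2}=1$, so on the sphere $f$ collapses to $f(x)=\|XX^{*}-\tilde I_{k}\|_{2}$, and expanding the square gives

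\bee
f(x)^{2}=\trace\!\big[(XX^{*})^{2}\big]-\tfrac1k .
\eee

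Thus the whole lemma is governed by the purity $\trace[(XX^{*})^{2}]=\trace[(\trace_{\C^{n}}xx^{*})^{2}]$, which I would treat by a direct second-moment computation. Using the standard swap identity $\trace[(\trace_{\C^{n}}\rho)^{2}]=\trace\big[(\rho\otimes\rho)(\mathbb F_{k}\otimes I_{n^{2}})\big]$, with $\mathbb F_{k}$ the flip on the two copies of $\C^{k}$ and the two copies of $\C^{n}$ left untouched, together with the moment formula $\Ex_{x}[xx^{*}\otimes xx^{*}]=\frac{1}{kn(kn+1)}(I+\mathbb F)$ on $\C^{kn}\otimes\C^{kn}$, where $\mathbb F=\mathbb F_{k}\otimes\mathbb F_{n}$, one is left with the two traces $\trace[\mathbb F_{k}\otimes I_{n^{2}}]=kn^{2}$ and $\trace[(\mathbb F_{k}\otimes\mathbb F_{n})(\mathbb F_{k}\otimes I_{n^{2}})]=\trace[I_{k^{2}}\otimes\mathbb F_{n}]=k^{2}n$. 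Hence $\Ex_{x}[\trace[(XX^{*})^{2}]]=\frac{k+n}{kn+1}$ and

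\bee
\Ex_{x}[f^{2}]=\frac{k+n}{kn+1}-\frac1k=\frac{k^{2}-1}{k(kn+1)}\le\frac1n ,
\eee

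so by Jensen's inequality $\Ex_{x}[f]\le\sqrt{\Ex_{x}[f^{2}]}\le n^{-1/2}$, which is the first bound.

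For the median I would pass from the mean via concentration. First, $f$ is $2$-Lipschitz on $S_{\C^{k}\otimes\C^{n}}$ for the Euclidean metric: by the reverse triangle inequality for $\|\cdot\|_{2}$ and submultiplicativity of the Hilbert--Schmidt norm,

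\bee
|f(x)-f(y)|\le\|XX^{*}-YY^{*}\|_{2}\le\big(\|X\|_{\infty}+\|Y\|_{\infty}\big)\,\|X-Y\|_{2}\le 2\,\|x-y\| ,
\eee

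using $\|X\|_{\infty}\le\|X\|_{2}=\|x\|=1$ and the fact that the reshaping $x\mapsto X$ is an isometry. Then L\'evy's lemma on $S_{\C^{k}\otimes\C^{n}}\cong S^{2kn-1}$ gives $\P(f>\Ex_{x}[f]+t)\le e^{-c\,kn\,t^{2}}$ for an absolute constant $c$; choosing $t=c'/\sqrt{kn}$ with $c'$ just large enough to make this probability smaller than $\tfrac12$ forces $\mathrm{med}(f)\le\Ex_{x}[f]+c'/\sqrt{kn}\le n^{-1/2}(1+c'/\sqrt k)$, and tracking the constant gives $c'=3$.

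There is no genuine obstacle here: the content is a routine second-moment estimate plus a textbook application of L\'evy's lemma. The only mildly delicate point is pinning down the constant $3$ in the median bound, which depends on the exact normalization of L\'evy's lemma one invokes. As the Comment after the statement records, this is immaterial for Theorem \ref{thm:main}, which only needs $\mathrm{med}(f)\lesssim k^{-1}$ in the regime $n\gtrsim k^{2}$; indeed even the crude estimate $\mathrm{med}(f^{2})\le 2\,\Ex_{x}[f^{2}]\le 2/n$ from Markov's inequality applied to $f^{2}$, i.e.\ $\mathrm{med}(f)\le\sqrt2\,n^{-1/2}$, would already suffice there.
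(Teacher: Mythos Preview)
Your proposal is correct and follows essentially the same route as the paper: the mean bound via Jensen plus the second-moment computation $\Ex_x[\trace[(XX^*)^2]]=\frac{k+n}{kn+1}$ (the paper invokes Schur's lemma where you write the explicit formula $\Ex_x[xx^*\otimes xx^*]=\frac{1}{kn(kn+1)}(I+\mathbb F)$), and the median bound via the global Lipschitz constant $2$ together with L\'evy's lemma. The only cosmetic difference is that the paper controls $\lvert \Ex f-\mathrm{med}\,f\rvert\le\Ex\lvert f-\mathrm{med}\,f\rvert$ by integrating the L\'evy tail around the median, whereas you invoke concentration around the mean and pick a single $t$; both yield $\mathrm{med}(f)\le \Ex f+O((kn)^{-1/2})$ with the same constant up to the normalization of L\'evy's lemma.
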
 
\begin{proof}
By the Jensen's inequality
\be 
\left( \Ex \left[ f(x) \right] \right)^2  \leq \Ex \left[ (f(x))^2\right] =  \Ex \left[\trace\left[\left(XX^*\right)^2\right]\right] - \frac 1 k
\ee 
Further, we see that there exist $\alpha, \beta >0$ and
\be
\Ex \left[\trace\left[\left(XX^*\right)^2 \right] \right] 
=  \trace \left [\Ex [xx^* \otimes xx^*]( P_k \otimes I_{n^2})  \right]   = \trace [(\alpha I_{k^2n^2} + \beta P_{kn}) (P_k \otimes I_{n^2})]
\ee
where $P_k$ and $P_{kn}$ are swapping matrices on $\C^k \otimes \C^k$ and $(\C^k \otimes \C^n)\otimes(\C^k \otimes \C^n)$, respectively. 
Here, for the last equality we used the Schur's lemma as in \cite{BrandaoHorodecki2010} because the expectation is invariant for $U \otimes U$ with $U \in \mathcal U (kn)$. 
Since $\alpha =\beta = \frac 1 {kn(kn+1)}$ we get a bound:
\be
 \Ex \left[ f(x) \right] \leq \sqrt{\frac{k+n}{kn+1}- \frac 1 k} \leq \frac1  {\sqrt{n}}  
\ee

For the second statement, the standard argument proceeds as follows. 
\be
\left| \Ex (f) - \mathrm{med} (f) \right|
&\leq& \Ex \left| f  - \mathrm{med} (f) \right|  \\
&\leq& 2 \sqrt{\frac{\pi}{8}} \int_0^{+\infty} 
\exp \left\{ - \frac{\left(kn- 1\right)\varepsilon^2} 4  \right\} \,d \varepsilon 
= \sqrt{\frac{\pi}{8}}  \cdot \sqrt{ \frac {4 \pi}{kn - 1}}
 \leq \frac 3  {\sqrt{kn}}
\ee
Here, we applied Jensen's inequality and Lemma \ref{Levy} for the first two inequalities. 
Note that in this calculation we set $L=2$ in \eqref{eq:Levy}, 
by using the bound \eqref{L-bound} with $\|X\|_\infty, \|Y\|_\infty \leq 1$,
which works for all $\varepsilon>0$.
The second last equality comes from the identity for the Gaussian distribution. 
\end{proof}  
 
All the existing papers on additivity violation of minimum output entropy via measure concentration argument
use large deviation bounds similar to the one in Theorem \ref{deviation}.
Especially the order $\exp\{-n\}$, instead of $\exp\{-n/k\}$, is important for their proofs. 
To this end, 
they essentially rectify the concerned functions on the unit spheres and apply the L\'evy's lemma.
However, in this rectifying process, one needs another large deviation bound from random matrix theory
or some extra efforts.  
In Theorem \ref{deviation}, we avoid this complication and prove the desired bound 
directly via the L\'evy's Lemma (see Lemma \ref{Levy}).
For this purpose, we need a small lemma before going on to the theorem:
\begin{lemma}\label{lemma:trick8}
Let $x \in S_{\C^k \otimes \C^n}$ be such that $f(x) \leq \mathrm{med}(f)$. Then,
\be
\|X\|_\infty  \leq   \frac 1 {\sqrt{k}} +2 \sqrt{\frac k n}
\ee
\end{lemma}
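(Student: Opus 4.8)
The plan is to bound $\|X\|_\infty$ by combining the hypothesis $f(x) \le \mathrm{med}(f)$ with the median bound from Lemma \ref{med-bound}. First I would recall that $\|X\|_\infty = \sqrt{\|XX^*\|_\infty}$, so it suffices to control the largest eigenvalue of the output state $\rho := XX^*$, which has trace one since $x$ is a unit vector. The function $f(x) = \|\rho - \tilde I_k\|_2$ measures exactly how far $\rho$ is from the maximally mixed state in Hilbert--Schmidt norm, so the idea is that a small $f(x)$ forces every eigenvalue of $\rho$ to be close to $1/k$.

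Concretely, if $\lambda_{\max}$ is the top eigenvalue of $\rho$, then $\|\rho - \tilde I_k\|_2 \ge |\lambda_{\max} - 1/k|$ since the Hilbert--Schmidt norm dominates the largest entry of the (diagonalized) difference. Hence $\lambda_{\max} \le 1/k + f(x) \le 1/k + \mathrm{med}(f)$. Plugging in the estimate $\mathrm{med}(f) \le \frac{1}{\sqrt n}(1 + 3/\sqrt k)$ from Lemma \ref{med-bound} gives
\be
\|X\|_\infty^2 = \lambda_{\max} \le \frac 1 k + \frac 1 {\sqrt n}\left(1 + \frac 3 {\sqrt k}\right).
\ee
Taking square roots and using $\sqrt{u+v} \le \sqrt u + \sqrt v$ then yields $\|X\|_\infty \le \frac 1 {\sqrt k} + \left(\frac 1 {\sqrt n}(1 + 3/\sqrt k)\right)^{1/2}$, and it remains to check that the second term is at most $2\sqrt{k/n}$ for the relevant range of $k$ (namely $k$ large, which is the standing assumption). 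Indeed $\frac 1 {\sqrt n}(1 + 3/\sqrt k) \le \frac 4 {\sqrt n} \le \frac{4k}{n}$ once $k \ge 1$, wait — more carefully, $1 + 3/\sqrt k \le 4$ and $\sqrt{4/n} = 2/\sqrt n \le 2\sqrt{k/n}$, so the bound $\|X\|_\infty \le \frac 1 {\sqrt k} + 2\sqrt{k/n}$ follows directly.

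I do not anticipate a serious obstacle here; the only thing to be slightly careful about is the elementary inequality $\sqrt{u+v}\le\sqrt u+\sqrt v$ and confirming $1 + 3/\sqrt k \le 4$, which holds for all $k \ge 1$. The one conceptual point worth stating cleanly is the step $\|\rho - \tilde I_k\|_2 \ge \lambda_{\max}(\rho) - 1/k$: this is just the observation that in the eigenbasis of $\rho$, the diagonal of $\rho - \tilde I_k$ includes the entry $\lambda_{\max} - 1/k$, whose absolute value is a lower bound for the full Hilbert--Schmidt norm. If one prefers, since $\lambda_{\max}\ge 1/k$ automatically (the maximum eigenvalue of a trace-one state on $\C^k$ is at least the average $1/k$), the absolute value can be dropped and the chain of inequalities is completely transparent.
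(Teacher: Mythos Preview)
Your approach is essentially the paper's: bound $\|XX^*\|_\infty$ by $\tfrac{1}{k}+\mathrm{med}(f)$ via $\|\cdot\|_\infty\le\|\cdot\|_2$ and the triangle inequality, invoke Lemma~\ref{med-bound}, then take a square root. The gap is in the very last step. After splitting $\sqrt{u+v}\le\sqrt u+\sqrt v$ with $v=\tfrac{1}{\sqrt n}(1+3/\sqrt k)$, the second term is
\[
\sqrt v \;\le\; \sqrt{4/\sqrt n}\;=\;2\,n^{-1/4},
\]
not $\sqrt{4/n}=2/\sqrt n$ as you wrote. The inequality $2\,n^{-1/4}\le 2\sqrt{k/n}$ is equivalent to $n\le k^2$, which fails precisely in the regime $k^2\lesssim n$ that the paper works in (Theorem~\ref{thm:main} even allows $k^2/n\to 0$). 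So the argument as written does not close.

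The repair is to avoid $\sqrt{u+v}\le\sqrt u+\sqrt v$, which is too lossy here, and instead compare directly with the square of the target. Expanding
\[
\left(\frac{1}{\sqrt k}+2\sqrt{\frac{k}{n}}\right)^{2}=\frac{1}{k}+\frac{4}{\sqrt n}+\frac{4k}{n},
\]
one sees immediately that $\tfrac{1}{k}+\tfrac{1}{\sqrt n}(1+3/\sqrt k)\le \tfrac{1}{k}+\tfrac{4}{\sqrt n}$ for all $k\ge 1$, which is exactly the route the paper takes.
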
 
\begin{proof}
The condition $f(x) \leq \mathrm{med}(f)$ implies via Lemma \ref{med-bound} that
\be
\|XX^*\|_\infty \leq \left\|XX^*- \tilde I_k \right\|_2 + \left\| \tilde I_k \right\|_\infty  
\leq \frac 1 {\sqrt{n}} \left( 1 + \frac 3 {\sqrt k} \right)  + \frac 1 k 
\leq \left( \frac 1 {\sqrt{k}} +2 \sqrt{\frac k n} \right)^2 
\ee 
\end{proof} 

This lemma gives the desired large deviation bound in a straightforward way:
\begin{thm} \label{deviation}
Let $k^2 = \alpha^2 n$ with $\alpha >0$. Then, 
for  $x \in S_{\C^k \otimes \C^n}$  uniformly distributed, 
\be
\Pr \left\{  f(x)  > h(k,\alpha,\epsilon) + \mathrm{med}(f) \right\}   
< \sqrt{\frac \pi 8}\exp \left\{ - \epsilon^2 \left( n- \frac 1k \right) \right\} 
\ee
for all $\epsilon>0$.  
Here, 
\be
 h(k,\alpha,\epsilon)  = \frac  {2 \epsilon (1  + 2\alpha + \epsilon) } k 
\label{function:h}
\ee
\end{thm}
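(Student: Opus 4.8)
The plan is to bound $f$ restricted to the \emph{good set} $A = \{x \in S_{\C^k \otimes \C^n} : f(x) \leq \mathrm{med}(f)\}$, show that on this set $f$ is a Lipschitz function with a controlled Lipschitz constant (small because $\|X\|_\infty$ is controlled there by Lemma \ref{lemma:trick8}), and then feed this into L\'evy's lemma (Lemma \ref{Levy}, with the explicit Lipschitz-constant bound \eqref{L-bound} referenced in the proof of Lemma \ref{med-bound}). Since $A$ has measure at least $1/2$, concentration around the median of $f$ transfers to concentration around the median of the restricted function, and the deviation probability we want is exactly the upper tail at level $\mathrm{med}(f) + h(k,\alpha,\epsilon)$.

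First I would recall that $f(x) = \|XX^* - \trace[XX^*]\tilde I_k\|_2$ is a composition of the map $x \mapsto XX^*$ (a degree-two map, not globally Lipschitz) with an affine map and a norm. The key observation is that the relevant Lipschitz estimate only needs to hold where $\|X\|_\infty$ is bounded: for $x,y$ with $\|X\|_\infty, \|Y\|_\infty \leq R$ one gets $\|XX^* - YY^*\|_2 \leq 2R\,\|X - Y\|_2 = 2R\,\|x-y\|_2$, and subtracting the trace term only rescales the identity part, which the Hilbert--Schmidt norm handles without increasing the constant in the relevant direction. By Lemma \ref{lemma:trick8}, on $A$ we have $\|X\|_\infty \leq \tfrac{1}{\sqrt k} + 2\sqrt{k/n} = \tfrac{1}{\sqrt k}(1 + 2\alpha)$ after substituting $k^2 = \alpha^2 n$, so the Lipschitz constant of $f$ on $A$ is at most $L = \tfrac{2}{\sqrt k}(1 + 2\alpha)$. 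Here is where the constant $1 + 2\alpha + \epsilon$ in $h(k,\alpha,\epsilon)$ will come from: one must also account for an $\epsilon$-thickening of $A$, because the standard argument applies L\'evy's lemma to an extension of $f|_A$ to all of the sphere whose Lipschitz constant picks up the slightly larger bound once we allow points within distance $\sim \epsilon/\sqrt k$ of $A$.

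The main steps, in order: (1) define $A$, note $\P(A) \geq 1/2$ by definition of the median; (2) on $A$, bound $\|X\|_\infty$ via Lemma \ref{lemma:trick8} and deduce that $f$ is $L$-Lipschitz there with $L \leq \tfrac{2}{\sqrt k}(1 + 2\alpha + \epsilon)$ on a suitable neighborhood, using \eqref{L-bound}; (3) apply L\'evy's lemma in the form of Lemma \ref{Levy} with this $L$ and dimension parameter $kn$, giving $\P\{|f - \mathrm{med}(f)| > t\} < \sqrt{\tfrac{\pi}{8}} \exp\{-t^2 (kn-1)/L^2\}$ on the good set, equivalently a bound in terms of a Gaussian exponent; (4) set $t = h(k,\alpha,\epsilon) = \tfrac{2\epsilon(1 + 2\alpha + \epsilon)}{k}$ and simplify the exponent: $t^2(kn-1)/L^2 = \epsilon^2 (kn - 1)/k^2 = \epsilon^2(n - 1/k)$, which is precisely the claimed rate; (5) since we only used $\P(A) \geq 1/2$ and the one-sided tail, the final bound $\P\{f > h(k,\alpha,\epsilon) + \mathrm{med}(f)\} < \sqrt{\tfrac{\pi}{8}} \exp\{-\epsilon^2(n - 1/k)\}$ follows for all $\epsilon > 0$.

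The main obstacle I expect is the step-(2) bookkeeping: making rigorous the passage from "$f$ is Lipschitz on the good set $A$" to "L\'evy's lemma applies with constant $L$," since L\'evy's lemma as usually stated needs a globally Lipschitz function. The clean way around this is either to extend $f|_A$ to a globally $L$-Lipschitz function on the whole sphere (McShane extension, which preserves the constant) and observe that this extension agrees with $f$ on $A$, so the tail event for $f$ inside $A$ is contained in the tail event for the extension; or to invoke the version of L\'evy's lemma that only requires the Lipschitz bound to hold on a set of measure $\geq 1/2$ together with a thickening, which is exactly why the $+\epsilon$ appears in $h$. Getting the constant in $h$ to match $1 + 2\alpha + \epsilon$ rather than something slightly larger is the one place where the estimates must be done carefully rather than crudely — everything else is substitution of $k^2 = \alpha^2 n$ and the Gaussian identity already used in Lemma \ref{med-bound}.
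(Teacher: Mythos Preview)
Your proposal is correct and follows essentially the same route as the paper: define $A=\{f\le\mathrm{med}(f)\}$, use Lemma~\ref{lemma:trick8} to control $\|X\|_\infty$ on the $\varepsilon$-thickening $A^{\epsilon/\sqrt k}$ (which is exactly where the extra ``$+\epsilon$'' in $h$ comes from), bound the Lipschitz constant there via \eqref{L-bound}, and feed this into the tailored L\'evy lemma (Lemma~\ref{Levy}) which only requires the Lipschitz bound on $A^\varepsilon\setminus A^\circ$---so your second option, not the McShane extension, is what the paper actually does. One small slip: in your step~(4) the exponent should read $t^2(kn-1)/L^2=\epsilon^2(kn-1)/k$, not $/k^2$, though your final value $\epsilon^2(n-1/k)$ is correct.
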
 
\begin{proof}
We follow the notations in Theorem \ref{Levy}.  
Let $A= \{x \in S_{\C^k \otimes \C^n}: f(x) \leq \mathrm{med}(f) \}$ and then Lemma \ref{lemma:trick8}  shows that 
 for $x \in A^\varepsilon$ with $\varepsilon = \frac{\epsilon}{\sqrt{k}}$, we have
\be \label{infinity-bound}
\|X\|_\infty \leq 
\frac  {1  + 2\alpha + \epsilon}{\sqrt{k}} 
\ee
Here, we used the fact that $\|\cdot\|_\infty \leq \|\cdot\|_2$.
Hence, we can set an upper bound of the Lipschitz constant on $A^{\epsilon /\sqrt k}$ to be  twice as large as \eqref{infinity-bound}. 
Indeed, for $x,y \in \C^k \otimes \C^n$, 
\be \label{L-bound}
| f(x) - f(y)| \leq \left\| XX^* - YY^* \right\|_2 \leq \left( \|X\|_\infty + \|Y\|_\infty \right) \left\| X-Y \right\|_2  
\ee
This trick on the Lipschitz constant with $\|\cdot\|_\infty$ was used in \cite{AubrunSzarekWerner2011}
and originally from \cite{BrandaoHorodecki2010}.
Therefore, applying Lemma \ref{Levy} completes the proof;
$\varepsilon L$ in \eqref{eq:Levy} is replaced by
\be
\frac \epsilon {\sqrt k} \cdot  \frac  {2( 1  + 2\alpha + \epsilon)}{\sqrt{k}} 
\ee
which is what we want as $h(k,\alpha,\epsilon)$. 
\end{proof} 
 
The following lemma brings our problem back to Milman's view of Dvoretzky's theorem.
We define a $\theta$-net to be a subset of the unit sphere such that 
any point on the sphere finds a point in the subset within distance $\theta$. 
This approximation technique works well not only with norms, as in Milman's view, but also
with functions having more or less positive homogeneity and triangle inequality:
\begin{lemma}\label{lemma:net}
Let $E$ be an $l$-dimensional subspace in $\C^k \otimes \C^n$    and   $\tilde E$  the unit sphere in it. 
Then, we can construct a $\theta$-net on $\tilde E$, denoted by $N_\theta$, so that the following statements hold. 
\be
|N_\theta| &\leq& \left( 1 + \frac 2 \theta \right) ^{2l} \\
\max_{x \in \tilde E} f(x) &\leq& \frac 1 {1-\theta^2- 2\theta } \cdot \max_{x \in N_\theta} f(x)
\label{net-bound}
\ee 
for $\theta >0$ such that RHS of \eqref{net-bound} is positive, in particular $0<\theta\leq \frac 14$. 
\end{lemma}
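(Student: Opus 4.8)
The plan is to prove the two displayed bounds separately; the estimate $|N_\theta|\le(1+2/\theta)^{2l}$ is the standard volumetric net bound, while \eqref{net-bound} is where the near-norm structure of $f$ enters. For the cardinality bound I would forget the complex structure and view $E$ as $\R^{2l}$ with the Euclidean norm, which under the identification \eqref{vec-mat} is the Hilbert--Schmidt norm. Take $N_\theta$ to be a maximal $\theta$-separated subset of $\tilde E$, built greedily; by maximality every point of $\tilde E$ lies within distance $\theta$ of $N_\theta$, so $N_\theta$ is a $\theta$-net. The open balls of radius $\theta/2$ about the points of $N_\theta$ are pairwise disjoint and contained in the ball of radius $1+\theta/2$ about the origin, so comparing $2l$-dimensional volumes gives $|N_\theta|(\theta/2)^{2l}\le(1+\theta/2)^{2l}$, i.e.\ $|N_\theta|\le(1+2/\theta)^{2l}$.

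For \eqref{net-bound}, write $g(X)=XX^{*}-\trace[XX^{*}]\tilde I_{k}$, so $f(x)=\|g(X)\|_{2}$, and record the ``almost additivity'' identity obtained by expanding $(Y+Z)(Y+Z)^{*}$:
\be
g(Y+Z)=g(Y)+g(Z)+R(Y,Z),\qquad R(Y,Z)=YZ^{*}+ZY^{*}-2\,\mathrm{Re}\,\trace[YZ^{*}]\,\tilde I_{k}.
\ee
Two facts will carry the argument. First, $g$ is homogeneous of degree two, $g(cX)=c^{2}g(X)$ for $c\ge0$, so with $M:=\max_{x\in\tilde E}f(x)$ we get $f(w)\le\|W\|_{2}^{2}M$ for every nonzero $w\in E$, since $w/\|W\|_{2}\in\tilde E$. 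Second, $R(Y,-W)=-R(Y,W)$ and $R(Y,cW)=c\,R(Y,W)$ for real $c\ge0$, which yield the polarization identity $g(Y+W)-g(Y-W)=2R(Y,W)$. Now let $M$ be attained at $x_{0}\in\tilde E$ and choose $y\in N_\theta$ with $\|X_{0}-Y\|_{2}\le\theta$; put $Z=X_{0}-Y\in E$, and if $Z=0$ we are done since $x_{0}\in N_\theta$ and $\tfrac{1}{1-\theta^{2}-2\theta}\ge1$. Applying the $\|\cdot\|_{2}$-triangle inequality to $g(X_{0})=g(Y)+g(Z)+R(Y,Z)$ gives $f(x_{0})\le f(y)+f(z)+\|R(Y,Z)\|_{2}$. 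Here $f(y)\le\max_{x\in N_\theta}f(x)$; by homogeneity $f(z)\le\|Z\|_{2}^{2}M\le\theta^{2}M$; and, with $\hat Z=Z/\|Z\|_{2}$, the polarization identity gives $R(Y,Z)=\tfrac{\|Z\|_{2}}{2}\bigl(g(Y+\hat Z)-g(Y-\hat Z)\bigr)$, so by homogeneity ($\|g(Y\pm\hat Z)\|_{2}\le\|Y\pm\hat Z\|_{2}^{2}M$) and the parallelogram law $\|Y+\hat Z\|_{2}^{2}+\|Y-\hat Z\|_{2}^{2}=4$,
\be
\|R(Y,Z)\|_{2}\le\frac{\|Z\|_{2}}{2}\Bigl(\|Y+\hat Z\|_{2}^{2}+\|Y-\hat Z\|_{2}^{2}\Bigr)M=2\|Z\|_{2}M\le2\theta M.
\ee
Combining the three estimates, $M\le\max_{x\in N_\theta}f(x)+(\theta^{2}+2\theta)M$, and since $1-\theta^{2}-2\theta>0$ for $0<\theta\le\tfrac14$ this rearranges to \eqref{net-bound}.

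The cardinality estimate is routine; the crux is the bound on the cross term $R(Y,Z)$. Naive estimates such as $\|R(Y,Z)\|_{2}\le2\|Y\|_{2}\|Z\|_{2}\le2\theta$, or an operator-norm refinement of size $O(\theta\sqrt{M+1/k})$, are only additive (or of the wrong shape) and are useless here, as they do not shrink in step with the subspace. The point — and the reason the problem fits Milman's picture of Dvoretzky's theorem — is that degree-two homogeneity together with polarization upgrades the cross term to $2\theta M$, a genuine multiplicative perturbation, which is exactly what lets the net argument close. I expect the only spots needing real care to be this polarization/homogeneity estimate for $R(Y,Z)$ and the handful of degenerate cases ($Z=0$, or $Y=\pm\hat Z$ so that one of $g(Y\pm\hat Z)$ vanishes).
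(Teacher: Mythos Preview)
Your proof is correct and follows the same overall skeleton as the paper's: expand the quadratic $g(X_0)=g(Y)+g(Z)+R(Y,Z)$, bound the three pieces by $\max_{N_\theta}f$, $\theta^2 M$, and $2\theta M$ respectively, and rearrange. The genuine difference is in how the cross term is handled. The paper diagonalizes the rank-$\le 2$ Hermitian operator $xy^*+yx^*$ on the \emph{input} space as $\alpha\,zz^*+\beta\,ww^*$ with orthonormal $z,w\in\tilde E$, then pushes through the partial trace to get $(\star)\le|\alpha|f(z)+|\beta|f(w)\le(|\alpha|+|\beta|)M$, and finishes with $|\alpha|+|\beta|=\|xy^*+yx^*\|_1\le 2$. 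You instead use the polarization identity $R(Y,\hat Z)=\tfrac12\bigl(g(Y+\hat Z)-g(Y-\hat Z)\bigr)$ together with degree-two homogeneity and the parallelogram law $\|Y+\hat Z\|_2^2+\|Y-\hat Z\|_2^2=4$. Both routes land on the identical constant $2\theta M$. The paper's argument exploits the specific ``state'' structure (eigenvectors of $xy^*+yx^*$ remain unit vectors in $E$, so $f$ applies directly), while yours uses only that $g$ is a homogeneous quadratic on a Hilbert space --- arguably more portable, since it would work verbatim for any function of the form $\|q(X)\|$ with $q$ quadratic, without needing a spectral decomposition whose eigenvectors stay in the domain. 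Your degenerate cases ($Z=0$, $Y=\pm\hat Z$) are harmless, as you note; compactness of $\tilde E$ and continuity of $f$ guarantee the maximum is attained.
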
 
\begin{proof}
Since the first bound is well-known, for example see \cite{PisierBook}, we only prove the second statement. 
For any $v \in \tilde E$ there exists $x\in N_\theta$, $y\in \tilde E$ and $0\leq \delta \leq\theta$ such that $v=x+\delta y$. 
Then, 
\be
f(x+\delta y) &\leq& \left\|XX^* - \tilde I_k \right\|_2 + \delta^2  \left\| YY^* - \tilde I_k \right\|_2 + 
\delta \underbrace{ \left\| XY^* + YX^* - \trace \left[ XY^* + YX^*\right] \tilde I_k \right\|_2 }_{(\star)}\\
&\leq&\max_{x \in N_\theta} f(x) + \left(\delta^2 +2\delta \right) \cdot  \max_{x \in \tilde E} f(x)     
\ee
Indeed, since $\trace_{\C^n} [xy^*+yx^*] = XY^* + YX^*$, we firstly write 
\be 
xy^* + yx^* = \alpha\, zz^* + \beta\, ww^* 
\ee
for some orthonormal $z,w \in \tilde E$ and $\alpha,\beta \in \R$, and secondly, we get
\be
(\star) &\leq& |\alpha| \left\| ZZ^* - \trace \left[ZZ^*\right] \tilde I_k \right\|_2 + |\beta| \left\| WW^* - \trace \left[WW^*\right] \tilde I_k \right\|_2 \\
&\leq& \left( |\alpha| + |\beta|  \right) \max_{x \in \tilde E} f(x)  \leq  2 \max_{x \in \tilde E} f(x)  
\ee
Here, we used the following bound:
\be
 |\alpha| + |\beta| = \left\| xy^* + yx^* \right\|_1 \leq \left\| xy^* \right\|_1 + \left\|  yx^* \right\|_1=2
\ee
This completes the proof. 
\end{proof} 

\begin{thm}\label{exist-subspace}
Suppose we have random $l$-dimensional subspaces $E \subset \C^k \otimes \C^n$ where $k^2 = \alpha^2 n$. 
For any $\epsilon>0$ and $0<\theta\leq \frac 14$, if we choose $l$ such that
\be \label{condition-l}
\frac {l} n \leq \frac{\epsilon^2 }{4 \log \left(1+\frac 2 \theta\right)}  
\ee
then there exists a subspace $E$ such that  
\be
\max_{x \in \tilde E} f(x) \leq  \frac 1 {1-\theta^2-  2\theta  } \cdot \left[h(k,\alpha,\epsilon) +  \frac {4\alpha} k \right]
\label{thebound}
\ee
where the function $h(\cdot,\cdot,\cdot)$ is defined in \eqref{function:h}.
\\
Comment: 
An important message of this theorem is that the RHS of \eqref{thebound} is bounded by $\frac{C(\alpha,\theta,\epsilon)} k$  where
$C(\alpha,\theta,\epsilon)$ is some constant depending on $\alpha$, $\theta$ and $\epsilon$.
\end{thm}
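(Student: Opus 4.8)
The plan is to combine the deviation bound of Theorem~\ref{deviation} with the $\theta$-net reduction of Lemma~\ref{lemma:net}, exactly in the spirit of the Milman proof of Dvoretzky's theorem. First I would fix $\epsilon>0$ and $0<\theta\le\frac14$ as in the statement, and for a random $l$-dimensional subspace $E=UE_0$ build a $\theta$-net $N_\theta\subset\tilde E$ with $|N_\theta|\le(1+2/\theta)^{2l}$, as guaranteed by Lemma~\ref{lemma:net}. Since $E$ is Haar-random, each fixed net point is a uniformly random point of $S_{\C^k\otimes\C^n}$, so Theorem~\ref{deviation} applies pointwise: for each $x\in N_\theta$,
\be
\Pr\left\{ f(x) > h(k,\alpha,\epsilon) + \mathrm{med}(f) \right\} < \sqrt{\tfrac\pi8}\,\exp\left\{-\epsilon^2\left(n-\tfrac1k\right)\right\}.
\ee

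Next I would take a union bound over the $|N_\theta|$ points of the net. This succeeds precisely when the cardinality bound $(1+2/\theta)^{2l}$ is beaten by the inverse deviation probability $\exp\{\epsilon^2(n-1/k)\}$, i.e. when $2l\log(1+2/\theta)\le \epsilon^2(n-1/k)$, which for large $k$ is implied by the hypothesis $l/n\le \epsilon^2/(4\log(1+2/\theta))$ in \eqref{condition-l} (the factor $4$ versus $2$ gives the needed slack to absorb the $-1/k$ term and the $\sqrt{\pi/8}$ prefactor). Hence with positive probability there is a subspace $E$ for which $\max_{x\in N_\theta} f(x)\le h(k,\alpha,\epsilon)+\mathrm{med}(f)$ simultaneously. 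Then I would invoke the median estimate from Lemma~\ref{med-bound}, $\mathrm{med}(f)\le \frac1{\sqrt n}(1+3/\sqrt k)$, and use $k^2=\alpha^2 n$, i.e. $\frac1{\sqrt n}=\frac\alpha k$, to bound $\mathrm{med}(f)\le \frac{4\alpha}{k}$ for $k$ large enough (since $\frac\alpha k(1+3/\sqrt k)\le\frac{4\alpha}k$ eventually). This yields $\max_{x\in N_\theta}f(x)\le h(k,\alpha,\epsilon)+\frac{4\alpha}{k}$.

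Finally I would pass from the net to the whole sphere via \eqref{net-bound} in Lemma~\ref{lemma:net}, which gives $\max_{x\in\tilde E}f(x)\le \frac1{1-\theta^2-2\theta}\max_{x\in N_\theta}f(x)$; combining with the previous display produces exactly \eqref{thebound}. The denominator is positive for $0<\theta\le\frac14$, which is why that range is imposed. The comment in the statement then follows since $h(k,\alpha,\epsilon)=\frac{2\epsilon(1+2\alpha+\epsilon)}{k}$ and $\frac{4\alpha}{k}$ are both of the form $\mathrm{const}/k$.

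The main obstacle is the quantitative bookkeeping in the union-bound step: one must check that the gap between the constant $4$ in \eqref{condition-l} and the constant $2$ appearing in $|N_\theta|\le(1+2/\theta)^{2l}$ is genuinely enough to swallow the prefactor $\sqrt{\pi/8}$ and the $-1/k$ correction in the exponent for all sufficiently large $k$, uniformly in the chosen $l$. A secondary subtlety is the measurability/independence point — that a Haar-random rotation of a fixed net point is uniform on the sphere — but this is standard. Everything else (the median bound, the net geometry, the deviation estimate) is quoted directly from the lemmas already proved, so the argument is essentially an assembly.
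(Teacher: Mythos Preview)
Your proposal is correct and follows essentially the same route as the paper: fix a $\theta$-net on a reference subspace $E_0$, rotate by Haar-random $U$, apply Theorem~\ref{deviation} pointwise, union-bound over the net using Lemma~\ref{lemma:net}, replace $\mathrm{med}(f)$ by $4\alpha/k$ via Lemma~\ref{med-bound}, and finish with \eqref{net-bound}. Two minor sharpenings: the paper fixes the net on $\tilde E_0$ \emph{before} applying $U$ (which is what makes each net point uniform, as you note), and no ``large $k$'' hypothesis is actually needed in either the union-bound or median step, since $1-\tfrac{1}{kn}>\tfrac12$ and $1+3/\sqrt{k}\le 4$ hold for all $k\ge 1$, so condition~\eqref{condition-l} alone suffices to make \eqref{prob-bound} strictly less than one.
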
 
\begin{proof}
Fix a subspace $E_0$ of dimension $l$ and construct a $\theta$-net on $\tilde E_0 = E_0 \cap S_{\C^k \otimes \C^n}$,
which we denote by $N_\theta$.
We calculate, 
\be
&&\Pr_{U \in \mathcal U (kn)} \left\{ f(x) > h(k,\alpha,\epsilon) +\frac {4\alpha} k   , \quad \exists x \in U N_\theta   \right\}  \\
&\leq&  |N_\theta| \cdot \Pr_{U \in \mathcal U (kn)}  \left\{ f(Ux_0) > h(k,\alpha,\epsilon) + \mathrm{med}(f) , \quad \text{ for fixed $x_0 \in  N_\theta  $} \right\}  \\
&\leq& \exp \left\{ 2l \log \left( 1+ \frac 2 \theta \right)  \right\} 
\times  \sqrt{\frac \pi 8}\exp \left\{ - \epsilon^2 n\left(1- \frac 1{kn}\right)  \right\} 
\label{prob-bound} 
\ee
Here, we used the first statement of Lemma \ref{lemma:net} and Theorem \ref{deviation}.
Since $1-\frac 1{kn}>\frac 1 2$, the condition \eqref{condition-l} implies that \eqref{prob-bound} is smaller than one.
Hence there exists $U \in \mathcal U(kn)$ such that 
\be
\max_{x \in U N_\theta} f(x) \leq   h(k,\alpha,\epsilon) + \frac {4\alpha} k 
\ee
Therefore, for this $U$, set $E=UE_0$ where $UN_\theta$ constitutes a $\theta$-net for $\tilde E = U \tilde E_0$ so that
the second statement of  Lemma \ref{lemma:net} completes the proof. 
\end{proof} 

The above Theorem \ref{exist-subspace} is ``tailored'' to prove the additivity violation.
Below, we give a similar statement in a different view point, 
which gives an upper bound for the function $f(\cdot)$ on a typical $l$-dimensional subspace of $\mathbb C^k \otimes \mathbb C^n$, and
which we believe brings better understanding of Theorem \ref{exist-subspace}. 
\begin{thm}\label{thm:typical-l}
For $l,k,n \geq 2$, there exists a subspace $E \subset \mathbb C^k \otimes \mathbb C^n$ of dimension $l$ such that 
\be 
\max_{x \in \tilde E} f(x) 
\leq 15 \frac 1 k \sqrt {\frac l n} + 30 \frac {\sqrt l} n + 36 \frac l {kn} + 10 \frac 1 {\sqrt n}
\ee
Note that the above integer constants are not chosen to be tight. 
\end{thm}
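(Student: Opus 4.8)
The plan is to combine the $\theta$-net argument of Lemma \ref{lemma:net} with the concentration bound of Theorem \ref{deviation}, but now optimizing the free parameters $\epsilon$ and $\theta$ as functions of $l,k,n$ rather than treating them as fixed constants. Fix an $l$-dimensional subspace $E_0 \subset \C^k \otimes \C^n$ and a $\theta$-net $N_\theta$ on $\tilde E_0$ with $|N_\theta| \leq (1 + 2/\theta)^{2l}$. For a random rotation $U \in \mathcal U(kn)$, a union bound over $N_\theta$ together with Theorem \ref{deviation} gives that $\max_{x \in U N_\theta} f(x) \leq h(k,\alpha,\epsilon) + \mathrm{med}(f)$ fails with probability at most
\be
\sqrt{\tfrac\pi 8}\,\exp\left\{ 2l \log\left(1 + \tfrac2\theta\right) - \epsilon^2\left(n - \tfrac1k\right) \right\},
\ee
so it suffices to choose $\epsilon$ so that $\epsilon^2 (n - 1/k) \geq 2l \log(1 + 2/\theta) + 1$, say. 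Concretely I would take $\theta = 1/4$ (so $\log(1 + 2/\theta) = \log 9 < 3$) and $\epsilon^2 = C_0 \max(l/n, 1/n)$ for a suitable absolute constant $C_0$; then $\epsilon \lesssim \sqrt{l/n} + 1/\sqrt n$, and with $\alpha = k/\sqrt n$ one has $h(k,\alpha,\epsilon) = \tfrac{2\epsilon}{k}(1 + 2\alpha + \epsilon) \lesssim \tfrac{\epsilon}{k} + \tfrac{\epsilon}{\sqrt n} + \tfrac{\epsilon^2}{k}$, which after substituting the bound on $\epsilon$ and the bound $\mathrm{med}(f) \leq \tfrac1{\sqrt n}(1 + 3/\sqrt k) \leq \tfrac2{\sqrt n}$ from Lemma \ref{med-bound} yields terms of the orders $\tfrac1k\sqrt{l/n}$, $\tfrac{\sqrt l}{n}$, $\tfrac{l}{kn}$, and $\tfrac1{\sqrt n}$. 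Finally, the factor $\tfrac1{1-\theta^2-2\theta} = \tfrac1{1 - 1/16 - 1/2} = \tfrac{16}{7} < 3$ from \eqref{net-bound} in Lemma \ref{lemma:net} transfers the bound from the net $U N_\theta$ to all of $\tilde E = U \tilde E_0$, at the cost of an absolute multiplicative constant; chasing the constants (deliberately loosely) gives the stated coefficients $15, 30, 36, 10$.

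In more detail, the steps in order are: (i) set $\theta = 1/4$ and pick $\epsilon$ as above so that the exponent in \eqref{prob-bound} is negative and the whole probability is $<1$ — here I must be slightly careful because Theorem \ref{deviation} is stated for $k^2 = \alpha^2 n$, which is automatic once we define $\alpha := k/\sqrt n$, and $\alpha$ need not be bounded, but it enters $h$ only through the combination $\epsilon\alpha/k = \epsilon/\sqrt n$, which is harmless; (ii) extract the existence of $U$ with $\max_{x \in U N_\theta} f(x) \leq h(k,\alpha,\epsilon) + \mathrm{med}(f)$; (iii) apply Lemma \ref{lemma:net} to pass to $\max_{x \in \tilde E} f \leq \tfrac{16}{7}(h(k,\alpha,\epsilon) + \mathrm{med}(f))$; (iv) bound $h$ and $\mathrm{med}(f)$ termwise using $\epsilon \leq \sqrt{C_0}(\sqrt{l/n} + 1/\sqrt n)$ and group like powers, inflating each coefficient to the nearest convenient integer. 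Since the theorem asks only for an upper bound with non-tight constants, one has a lot of slack and can afford to bound $1 + 2\alpha + \epsilon$ crudely and to replace $n - 1/k$ by $n/2$ throughout.

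The main obstacle — really the only delicate point — is making the choice of $\epsilon$ simultaneously (a) large enough that $\epsilon^2 n$ beats $2l \log 9$ plus the $O(1)$ from the $\sqrt{\pi/8}$ prefactor, and (b) small enough that $h(k,\alpha,\epsilon)$, after multiplication by $16/7$, does not produce any term larger than the four listed. The tension is mild because $h$ is linear in $\epsilon$ up to the $\epsilon^2/k$ correction, so as long as $\epsilon = \Theta(\sqrt{l/n} + 1/\sqrt n)$ both requirements hold; the bookkeeping is just verifying that $\epsilon^2/k \cdot 16/7$ contributes only to the $l/(kn)$ and $1/k \cdot \sqrt{l/n}$-type terms (using $1/\sqrt n \leq \sqrt{l/n}$ when $l \geq 1$) and that $\epsilon/\sqrt n \cdot 16/7$ contributes only to $\sqrt l/n$ and $1/\sqrt n$-type terms. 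A secondary, purely cosmetic issue is that for very small $l/n$ the term $\epsilon \sim 1/\sqrt n$ dominates, which is exactly why the isolated $10/\sqrt n$ summand appears in the statement and must be carried along rather than absorbed. Everything else is the routine union-bound-plus-net computation already executed in Theorem \ref{exist-subspace}.
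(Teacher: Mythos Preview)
Your proposal is correct and follows essentially the same route as the paper: fix $\theta = 1/4$, choose the deviation parameter so that $\epsilon^2 n$ beats $2l\log 9$ (the paper does this by applying L\'evy's lemma directly with radius $\Delta = \sqrt{6l/(kn)}$, which under the reparametrization $\epsilon = \Delta\sqrt k$ used in Theorem~\ref{deviation} is exactly your $\epsilon \sim \sqrt{l/n}$), then pass from the net to all of $\tilde E$ via the factor $1/(1-\theta^2-2\theta)$. The only slip is the claim $\mathrm{med}(f) \leq 2/\sqrt n$, which fails for $k<9$; the correct bound $1+3/\sqrt k \leq 1+3/\sqrt 2 \approx 3.12$ still fits comfortably under the final constant $10$, so nothing is affected.
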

\begin{proof}
As in the proof of Theorem \ref{exist-subspace}, we calculate the following probability:
\be
&&\Pr_{U \in \mathcal U (kn)} \left\{ f(x) > \Delta \cdot 2 \left( \frac 1 {\sqrt k} + 2 \sqrt{\frac k n} + \Delta \right) + \frac 1 {\sqrt n} \left( 1 + \frac 3 {\sqrt k} \right)  , \quad \exists x \in U N_\theta   \right\}  \\
&\leq&  |N_\theta| \cdot \Pr_{U \in \mathcal U (kn)}  \left\{ f(Ux_0) 
> \Delta \cdot \underbrace{2 \left( \frac 1 {\sqrt k} + 2 \sqrt{\frac k n} + \Delta \right) }_{(*)}+ \mathrm{med}(f) , 
\quad \text{ for fixed $x_0 \in  N_\theta  $} \right\}  \\
&<& \exp \left\{ 2l \log \left( 1+ \frac 2 \theta \right)  \right\} 
\times \exp \left\{ - \Delta^2 (kn-1) \right\} 
\ee
Here, $\Delta>0$ and $(*)$ correspond to  $\varepsilon$ and $L$ in \eqref{eq:Levy}. 
To make the probability smaller than one, we need 
\be
\Delta^2 \geq \frac {2l \log (1 + \frac 2 \theta)} {kn-1}
\ee
Hence, set $\theta = \frac 1 4$, for example, and
\be
\Delta = \sqrt{\frac{6l}{kn}} > \sqrt{ \frac {2l \log (1 + \frac 2 \theta)} {kn-1} }
\ee
Then, there exists a subspace $E$ such that
\be
\max_{x \in \tilde E} f(x) 
&\leq& 3\cdot \left[\sqrt{\frac{6l}{kn}} \cdot 2 \cdot 
\left( \frac 1 {\sqrt{k}} + 2 \sqrt{\frac k n} + \sqrt{\frac{6l}{kn}} \right) + 
\frac1 {\sqrt n} \left( 1+  \frac 3 {\sqrt k}\right) \right] \\
&\leq& 6\sqrt 6 \frac 1 k \sqrt {\frac l n} + 12 \sqrt 6 \frac {\sqrt l} n + 36 \frac l {kn} + \left(3+ \frac 9 {\sqrt 2}\right) \frac 1 {\sqrt n}
\ee
This completes the proof. 
\end{proof}

\section{Hastings' proof and ours}  \label{sec:Hastings}
An important step in our proof can be seen in \eqref{net-bound} where  
the bound over the whole domain (subspace)  can be set to be, for example, twice as large as  the bound only over the net
if one properly chooses $\theta>0$. 
We emphasize here that choice of $\theta$ is independent of $k$.
If we had thought of this problem by using the Lipschitz constant, the correction would be an additive term instead of a multiplicative constant. 
Since the Lipschitz constant is at best proportional to $\frac 1 {\sqrt{k}}$, the additive correction  would be proportional to $\frac \theta {\sqrt k}$. 
However, we need a bound proportional to $\frac 1 k$. Hence $\theta$ must be proportional to $\frac1 {\sqrt k}$,
which would give an unwanted $k$-dependent factor in  \eqref{condition-l}.  
Therefore it is crucial in our method to use ``positive homogeneity and triangle inequality'' of function $f$ in order to get the bound    \eqref{net-bound}.  

In this kind of problems, one of useful approaches is to ask  ``how much of the domain can be approximated by one point''. 
In our proof, it is 
$\exp \left\{ -2 n \log \left( 1+ \frac 2 \theta \right)  \right\}$ when $l=n$. 
We dare to say that this corresponds to (37), derived from (34), in the supplementary information of \cite{Hastings2009}.
This idea of him is roughly stated as follows, hoping that there is not misunderstanding.   

One can decompose uniformly distributed  $z \in S_{\C^n}$ as
\be
z = \omega x + \sqrt{1-|\omega|^2} y
\ee
where $x$ is fixed  and $y$ is uniformly distributed on $S_{x^\perp} \simeq S_{\C^{n-1}}$. 
Also, note that $|\omega|^2$ has the law of Beta distribution. 
Via this decomposition, we have
\be
\Phi(zz^*) \approx |\omega|^2 \Phi(xx^*)  + (1-|\omega|^2) \Phi(yy^*) \approx |\omega|^2 \Phi(xx^*)  + (1-|\omega|^2) \tilde I_k
\label{Hastings-deco}
\ee
The second approximation is assumed because generically channels send random inputs to a neighborhood of $\tilde I_k$ 
although we need a careful analysis for this statement. For example, see \cite{FKM2010},
where the important idea \emph{tubal neighborhood} was reformulated as \emph{TUBE}. 
However we believe that we arrive at the same goal, or at least get convinced, 
if we look at \eqref{Hastings-deco} in the Hilbert-Schmidt norm. 
First, we have 
\be
\underbrace{ \left\| \Phi(zz^*) - \tilde I_k \right\|_2}_{(\star)} \gtrapprox|\omega|^2  \underbrace{ \left\| \Phi(xx^*) - \tilde I_k \right\|_2 }_{(*)}
\ee
but  $|\omega|^2>\frac 12$ occurs with probability $\exp \{-n \log 2\}$ because 
$|\omega|^2$ has the law of the beta distribution $B(2,2n-2)$. 
This means that any fixed point $x$ approximates other points of measure $\exp\{-n \log 2\}$ in such a way that 
$(\star)$ is at least half  as large as $(*)$ . 
So, assuming that there exists an input $xx^*$ which gives a large value in $(*)$ 
we get a contradiction because if we take random quantum channels $(\star)$ is likely to be small 
with the large deviation bound as in Theorem \ref{deviation}; 
we just set parameters to get proper constants which result in a contradiction. 

Therefore,  the connection between those two methods can be stated as follows. 
Hastings' method considers how much part of the domain can be approximated by unwanted points to get a contradiction.
Our method focuses on desired points in the domain and use $\epsilon$-net argument to prove the result.
Interestingly, then, both methods result in similar estimates as written above. 
On the other hand, however, our estimate for this approximation in the domain is made only from the norm-like properties whereas 
Hastings' involves probabilistic arguments.

\section{Random unitary channel} \label{sec:RU}
We briefly discuss on a class of channels called \emph{random unitary channels}: 
\be
\Phi(\rho) = \frac 1 k \sum_{i=1}^k U_i \rho U_i^* 
\ee
where $U_i \in \mathcal U(n)$. 
Through these channels, input states will be rotated by $U_i$ with equal probability. 
To construct random channels in this class, we take $U_i$ with respect to the Haar measure independently. 

It may seem obvious that 
additivity violation  holds for this class too,
 but since this class forms a measure-zero set in the general channels it is not rigorously obvious. 
However, this class of channels are very close to the one considered in Hastings' paper \cite{Hastings2009},
so additivity violation for this class may be deduced from it. 
If one wants to use our method, one can use the measure concentration argument on
\be
S_{\C^n} \times \cdots \times S_{\C^n}
\ee 
See 6.5.2 of \cite{MilmanSchechtman1985} for details
where one can find that this product space forms a normal L\'evy family.

\section{Regularized minimum output entropy} \label{sec:reg}
We define the regularized minimum output entropy of channels as follows. 
\be
\bar S_{\min}(\Phi) = \lim_{n \to \infty } \frac 1 n S_{\min} \left(\Phi^{\otimes n}\right)
\ee
The limit exists from the following property:
\be
S_{\min} \left(\Phi^{\otimes(m+n)}\right) \leq S_{\min} \left(\Phi^{\otimes m}\right) +S_{\min} \left(\Phi^{\otimes n}\right)  
\ee
We think that it may be a good idea to investigate the following additivity question:
\be\label{reg?}
\bar S_{\min}(\Phi \otimes \Omega) \stackrel{?}{=} \bar S_{\min}(\Phi) + \bar S_{\min}(\Omega)  
\ee
to understand better the question of additivity of classical capacity  in \eqref{open}.
This is because the former problem can be analyzed by eigenvalues of output states 
while the latter needs the geometry of output states. 
In fact,  this eigenvalue approach lead us to discovery of additivity violation with a help of random matrix theory;
we could have proved additivity violation of Holevo capacity somehow but it did not happen.  
This is why, we suggest that the question \eqref{reg?} should be asked first.
In fact, Theorem \ref{C-S} supports this idea. 
We state and prove a widely known fact which 
shows a relation between $C(\cdot)$ and $\bar S_{\min}(\cdot)$ by extending the proof method in \cite{Shor2004}.
\begin{thm}\label{C-S} 
Additivity violation of regularized minimum output entropy will imply 
additivity violation of classical capacity.
\end{thm}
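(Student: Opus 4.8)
The plan is to reduce everything to two known ingredients: the Shor-type equivalence machinery of \cite{Shor2004}, relating additivity of Holevo capacity, minimum output entropy, and entanglement of formation, and the Schumacher--Westmoreland/Holevo identification of classical capacity $C(\cdot)$ with the regularization of $\chi(\cdot)$. The key conceptual point is that all three quantities in Shor's equivalence are already regularized-stable: if additivity holds for all tensor powers of a given family of channels, then the regularized versions inherit the corresponding equalities. So the strategy is to run Shor's argument ``one level up,'' on regularized quantities, rather than on one-letter quantities.

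Concretely, I would proceed as follows. First, suppose for contradiction that regularized classical capacity is additive, i.e. $C(\Phi \otimes \Omega) = C(\Phi) + C(\Omega)$ for all channels $\Phi,\Omega$. Since $C(\Psi) = \lim_n \frac1n \chi(\Psi^{\otimes n})$, superadditivity of $\chi$ under tensor products gives $C(\Psi) \geq \chi(\Psi)$ always, and the assumed additivity of $C$ forces, via a standard blocking/averaging argument applied to the channels $\Phi^{\otimes a}\otimes\Omega^{\otimes b}$, that $\chi$ is in fact ``asymptotically additive'' across the pair $(\Phi,\Omega)$ in the sense needed to feed Shor's reduction. Second, I would invoke Shor's result \cite{Shor2004}: additivity of Holevo capacity (for all channels) is equivalent to additivity of minimum output entropy (for all channels), and the reduction is constructive — it builds, from a pair of channels violating MOE additivity, an explicit pair violating $\chi$-additivity, using channels that append maximally mixed ancillas and the convexity structure of $\chi$ versus the concavity of $S$. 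The crucial observation is that this construction commutes with taking tensor powers: applying it to $\Phi^{\otimes n}$ and $\Omega^{\otimes n}$ and then regularizing shows that additivity of $\bar S_{\min}$ across $(\Phi,\Omega)$ implies additivity of $C$ across the associated modified channel pair. Contrapositively, a violation $\bar S_{\min}(\Phi\otimes\Omega) < \bar S_{\min}(\Phi) + \bar S_{\min}(\Omega)$ propagates through the construction to a violation of $C$-additivity for the modified channels.

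The steps in order: (1) recall the definitions and the limit existence of $\bar S_{\min}$ and of $C$ as regularizations, noting subadditivity of $S_{\min}$ under $\otimes$ and superadditivity of $\chi$ under $\otimes$; (2) state Shor's channel construction $\Phi \mapsto \Phi'$ (appending an ancilla and forming a suitable convex combination) and the inequality chain it yields linking $\chi(\Phi'_1 \otimes \Phi'_2)$ to $S_{\min}(\Phi_1 \otimes \Phi_2)$ and the single-channel terms; (3) observe that $(\Phi^{\otimes n})' $ and $\Phi'^{\otimes n}$ differ only by local ancilla manipulations that do not affect the regularized quantities, so Shor's inequalities survive division by $n$ and passage to the limit; (4) conclude that if $\bar S_{\min}$ were violated but $C$ additive, the limiting inequalities would contradict each other, giving the claimed implication.

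The main obstacle I anticipate is step (3): making precise that Shor's construction ``commutes with regularization.'' Shor's reduction is one-shot and its ancilla-appending and convex-combination steps introduce terms of order $\log(\text{ancilla dimension})$ that must be controlled uniformly in $n$ after dividing by $n$; one has to check that the ancilla dimensions in $(\Phi^{\otimes n})'$ can be taken to grow only polynomially (or that the relevant $\frac1n\log$ corrections vanish), so that the limits interchange cleanly. This is essentially bookkeeping in Shor's argument, but it is where the proof has real content beyond citing \cite{Shor2004,SchumacherWestmoreland1997,Holevo1998}; everything else is assembling known equalities and inequalities in the right order.
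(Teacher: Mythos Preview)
Your overall strategy---use Shor's construction to pass from $S_{\min}$ to $\chi$, then regularize---is exactly the paper's approach. The difference is in how step (3) is handled, and your proposed resolution of it is both more complicated and less solid than what the paper actually does.

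You frame the obstacle as showing that $(\Phi^{\otimes n})'$ and $(\Phi')^{\otimes n}$ ``differ only by local ancilla manipulations,'' and you worry about $\frac{1}{n}\log(\text{ancilla dimension})$ corrections. That comparison is not obviously true, and chasing it would be painful. The paper sidesteps it entirely: its Lemma~\ref{channel extension} builds $\tilde\Phi$ via a Weyl-operator twirl on an enlarged input $\C^{k^2}\otimes\C^l$ and proves the \emph{exact} identity
\[
\chi\bigl(\tilde\Phi^{\otimes m}\otimes\tilde\Omega^{\otimes n}\bigr)=\log\bigl(k^m(k')^n\bigr)-S_{\min}\bigl(\Phi^{\otimes m}\otimes\Omega^{\otimes n}\bigr)
\]
for \emph{all} $m,n\geq 0$ simultaneously, with no error terms at all. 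The point is that the Weyl group acts transitively enough that an ensemble built from a single optimal $\rho_0$ (tagged by all Weyl indices with equal weight) always averages to the maximally mixed output, so the first term in $\chi$ is forced to $\log(k^m(k')^n)$ while the second term is exactly $S_{\min}$ by concavity. Once you have this identity, dividing by $n$ (with $m=n$) and taking the limit gives $C(\tilde\Phi\otimes\tilde\Omega)=\log(kk')-\bar S_{\min}(\Phi\otimes\Omega)$ and likewise for the single channels, so a violation of $\bar S_{\min}$-additivity transfers directly to a violation of $C$-additivity---no contrapositive, no bookkeeping, no asymptotic corrections. Your instinct about where the content lies was right; the resolution is just cleaner than you anticipated.
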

\begin{proof}
Suppose there are some channels $\Phi$ and $\Omega$ such that 
\be
\lim_{n \rightarrow \infty}\frac{1}{n} S_{\min} \left( \Phi^{\otimes n} \otimes \Omega^{\otimes n} \right) 
< \lim_{n \rightarrow \infty}\frac{1}{n} S_{\min} \left( \Phi^{\otimes n}  \right) 
+ \lim_{n \rightarrow \infty}\frac{1}{n} S_{\min} \left( \Omega^{\otimes n} \right) 
\ee
Then, by Lemma \ref{channel extension}, 
there are channels $\tilde \Phi$ and $\tilde \Omega$ such that
\be
\log (k_1 k_2)  - \lim_{n \to \infty}\frac{1}{n} \chi \left( \tilde\Phi^{\otimes n} \otimes \tilde\Omega^{\otimes n} \right) 
< \log k_1 - \lim_{n \to \infty}\frac{1}{n} \chi \left( \tilde\Phi^{\otimes n}  \right) 
+ \log k_2 -  \lim_{n \to \infty}\frac{1}{n} \chi \left( \tilde\Omega^{\otimes n} \right) 
\ee
where $k_1$ and $k_2$ are output dimensions of $\Phi$ and $\Omega$, respectively. 
\end{proof}

To complete the above proof  we need to show Lemma \ref{channel extension}.
To this end, we introduce the following definitions.
For the additive group $\mathbb{Z}_k = \{0,1, \ldots,k-1\}$ we define the discrete Weyl operators on $\C^k$:
\be \label{Weyl}   
W_z = U^x V^y \qquad \text{where} \quad z=(x,y) \in \mathbb{Z}_k \times \mathbb{Z}_k
\ee
Here, $U$ and $V$ are defined as 
\be
U e_r= e_{r+1}\quad\text{and} \quad V e_r= \exp\{ 2\pi {\rm i} r/k \} \cdot e_r 
\qquad (r = 0, \ldots, k-1)
\ee 
where $\{e_0, \ldots, e_{k-1} \}$ is the canonical basis of $\C^k$.

\begin{lemma}\label{channel extension}
Take two channels 
\be
\Phi : L \left(\C^l \right) \rightarrow L \left(\C^k\right)
\qquad \text{and}\qquad
\Omega: L \left(\C^{l^\prime}\right) \rightarrow L \left(\C^{k^\prime}\right)  
\ee
then there exist channels $\tilde\Phi$ and $\tilde\Omega$ such that
\be
\chi \left(\tilde\Phi^{\otimes m} \otimes \tilde\Omega^{\otimes n}\right)
= \log (k^m (k^{\prime})^n) -  S_{\min} \left(\Phi^{\otimes m} \otimes \Omega^{\otimes n}\right)
\qquad \text{for} \quad\forall m,n \in \N \cup \{0\}
\ee
\end{lemma}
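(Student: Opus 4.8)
The plan is to construct $\tilde\Phi$ (and symmetrically $\tilde\Omega$) by adjoining to $\Phi$ a classical "flag" register carrying the Weyl-twist index, so that the Holevo capacity of the enlarged channel is forced to be attained on the ensemble that randomizes over all $k^2$ Weyl operators. Concretely, for an input on $\C^l \otimes \C^k \otimes \mathbb{Z}_k^2$ (or a suitable encoding thereof), let $\tilde\Phi$ act as $\Phi$ on the $\C^l$ slot, leave the second $\C^k$ slot as a controlled Weyl conjugation $W_z(\cdot)W_z^*$ controlled by the classical value $z$, and copy $z$ to the output so that the receiver learns which twist was applied. The point of the discrete Weyl operators defined in \eqref{Weyl} is the standard identity $\frac{1}{k^2}\sum_{z\in\mathbb{Z}_k^2} W_z \sigma W_z^* = \tilde I_k$ for every state $\sigma$ on $\C^k$, i.e. the Weyl operators form a unitary 1-design.

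First I would record that design identity and its tensor-power version: for any state $\sigma$ on $\C^{k^m}$, averaging over $\mathbf{z}=(z_1,\dots,z_m)\in(\mathbb{Z}_k^2)^m$ the operators $W_{\mathbf z}=W_{z_1}\otimes\cdots\otimes W_{z_m}$ conjugating $\sigma$ yields $\tilde I_{k^m}$. Second, I would spell out $\tilde\Phi$ precisely so that $\tilde\Phi^{\otimes m}\otimes\tilde\Omega^{\otimes n}$ has output space $\C^{k^m}\otimes\C^{(k')^n}$ tensored with the classical flag registers, and verify complete positivity and trace preservation (routine, since it is a convex mixture of isometric channels controlled by an orthonormal basis). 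Third, for the upper bound $\chi(\tilde\Phi^{\otimes m}\otimes\tilde\Omega^{\otimes n})\le \log(k^m(k')^n) - S_{\min}(\Phi^{\otimes m}\otimes\Omega^{\otimes n})$: since the flag is classical and copied to the output, the Holevo quantity decomposes over the flag value, each branch is a unitarily-rotated copy of $\Phi^{\otimes m}\otimes\Omega^{\otimes n}$, whose output entropy is bounded below by $S_{\min}(\Phi^{\otimes m}\otimes\Omega^{\otimes n})$ and whose output dimension is $k^m(k')^n$, giving $\chi \le \log(k^m(k')^n) - S_{\min}$. Fourth, for the matching lower bound: feed the ensemble that picks a purifying input $\rho_*$ achieving $S_{\min}(\Phi^{\otimes m}\otimes\Omega^{\otimes n})$ and then applies a uniformly random Weyl twist $W_{\mathbf z}\otimes W_{\mathbf z'}$; by the design identity the average output is exactly maximally mixed, so the first term in \eqref{Holevo-cap} is $\log(k^m(k')^n)$, while each member of the ensemble has output entropy exactly $S(\Phi^{\otimes m}\otimes\Omega^{\otimes n}(\rho_*)) = S_{\min}(\Phi^{\otimes m}\otimes\Omega^{\otimes n})$, giving $\chi \ge \log(k^m(k')^n) - S_{\min}$.

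Combining the two bounds gives the claimed equality for all $m,n\ge 0$; the cases where $m$ or $n$ is zero are handled by the same computation with the convention $W$ acting on a trivial register. Then in the proof of Theorem \ref{C-S} one divides by $n$, sends $n\to\infty$, and uses that $\bar S_{\min}$ exists (subadditivity, as noted) together with $C(\cdot) = \lim \frac1n \chi(\cdot^{\otimes n})$, so that the strict inequality for $\bar S_{\min}$ transfers directly to a strict inequality $C(\tilde\Phi\otimes\tilde\Omega) < C(\tilde\Phi)+C(\tilde\Omega)$.

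The main obstacle I anticipate is the upper-bound direction: one must argue that attaching a classical flag that is both a control for the Weyl conjugation \emph{and} transmitted to the output genuinely forces the Holevo capacity into the form $\log(\text{output dim}) - S_{\min}$, i.e. that no cleverer ensemble — for instance one using coherent superpositions across flag values, or correlations between the $m$ copies of $\tilde\Phi$ and the $n$ copies of $\tilde\Omega$ — can beat it. The clean way around this is to note that measuring the flag register at the output is a reversible (in fact capacity-preserving in the relevant direction) operation because the flag is copied, so one may restrict to flag-diagonal inputs without loss; after that restriction the Holevo functional splits as a flag-average of branch contributions and each branch is a fixed unitary rotate of $\Phi^{\otimes m}\otimes\Omega^{\otimes n}$, at which point the bound $S(\text{output})\ge S_{\min}$ and $S(\text{flag-average output})\le\log(k^m(k')^n)$ finish it. Making this "restrict to flag-diagonal inputs" step airtight — essentially a data-processing / dephasing argument on the classical register — is the one place that needs care; everything else is the unitary 1-design identity plus bookkeeping.
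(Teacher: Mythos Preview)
Your overall idea---Shor's Weyl-covariant extension---is the right one, and your lower-bound ensemble is exactly what the paper uses. But your specific construction has a genuine error: you transmit the flag register to the output. If the flag $z\in\Z_k^2$ is copied to the output, then $\tilde\Phi$ has output dimension $k\cdot k^2$, and the sender can communicate $2\log k$ classical bits through the flag alone; hence $\chi(\tilde\Phi)\geq 2\log k$, which already exceeds the target value $\log k - S_{\min}(\Phi)$. More generally, with the flag at the output one computes $\chi(\tilde\Phi)=\log k^2+\chi(\Phi)$, so $\log(\text{output dim})-\chi(\tilde\Phi)=\log k-\chi(\Phi)$, which is \emph{not} $S_{\min}(\Phi)$ for a generic $\Phi$. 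Thus the identity you are after fails for your channel.

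The paper's construction instead \emph{discards} the flag: $\tilde\Phi:L(\C^{k^2}\otimes\C^l)\to L(\C^k)$ is defined by $\tilde\Phi(\rho)=\sum_{z}W_z\,\Phi\big((e_z^*\otimes I)\rho(e_z\otimes I)\big)W_z^*$. Now the output dimension is exactly $k$, so the universal bound $\chi(\Psi)\leq\log(\text{output dim})-S_{\min}(\Psi)$ already gives $\chi(\tilde\Phi^{\otimes m}\otimes\tilde\Omega^{\otimes n})\leq\log(k^m(k')^n)-S_{\min}(\tilde\Phi^{\otimes m}\otimes\tilde\Omega^{\otimes n})$. The only remaining point is $S_{\min}(\tilde\Phi^{\otimes m}\otimes\tilde\Omega^{\otimes n})=S_{\min}(\Phi^{\otimes m}\otimes\Omega^{\otimes n})$, and this follows immediately from concavity of $S(\cdot)$: every output of $\tilde\Phi^{\otimes m}\otimes\tilde\Omega^{\otimes n}$ is a convex combination of unitary rotates of outputs of $\Phi^{\otimes m}\otimes\Omega^{\otimes n}$. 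In particular, the ``restrict to flag-diagonal inputs'' dephasing step you flag as the main obstacle is not needed at all---that difficulty was self-inflicted by keeping the flag in the output. Drop the flag from the output and your argument collapses to the paper's.
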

\begin{proof}
Define a channel $\tilde \Phi : L (\C^{k^2} \otimes \C^l) \rightarrow L (\C^{k})$ such that
\be
\tilde \Phi (\rho) 
= \sum_{z \in \Z_k \times \Z_k}  W_z \Phi\left(( e_z^* \otimes I ) \rho (e_z  \otimes I)\right) W^*_z 
\ee
Here,   $e_z= e_x \otimes e_y $ so that $\{e_z\}$ is the canonical basis of $\C^{k^2}=\C^k \otimes \C^k$. 
We also define $\tilde\Omega$ in a similar way. 

Suppose 
\be
S_{\min} \left(\Phi^{\otimes m} \otimes \Omega^{\otimes n}\right)
= S\left(\Phi^{\otimes m} \otimes \Omega^{\otimes n}(\rho_0)\right)
\ee
for some $\rho_0 \in L((\C^l)^{\otimes m} \otimes (\C^{l^\prime})^{\otimes n})$.
Then, think of states 
\be
E\left(z^{(m)}\right) \otimes E\left(z^{\prime (n)}\right)  \otimes \rho_0   
\ee
Here, $z^{(m)} = (z_1,\ldots , z_m)$ are strings of $\Z_k \times \Z_k$ of length $m$, and
$z^{\prime (n)} = (z^\prime_1, \ldots, z^\prime_n)$
 of $\Z_{k^\prime} \times \Z_{k^\prime}$ of length $n$ so that
\be
E\left(z^{(m)}\right)  
 = e_{z_1}e_{z_1}^* \otimes \cdots \otimes e_{z_m} e_{z_m}^*
\ee
and $E(z^{\prime(n)})$ is defined similarly. 
Note that combinations of these two strings amount to $k^{2m} (k^{\prime})^{2n}$. 
Then, we claim that the ensemble of  states made from all the possible strings with equal probability leads us to our conclusion. 
First,
\be
S\left(\frac{1}{k^{2m} (k^{\prime})^{2n}} \sum_{(z^{(m)},z^{\prime (n)})} 
\tilde\Phi^{\otimes m} \otimes \tilde\Omega^{\otimes n} \left(E\left(z^{(m)}\right) \otimes E\left(z^{\prime (n)}\right)  \otimes \rho_0   \right) \right) 
= \log (k^m (k^{\prime})^n)
\ee

Secondly,
for each $(z^{(m)},z^{\prime (n)})$, 
\be
S\left(
\tilde\Phi^{\otimes m} \otimes \tilde\Omega^{\otimes n}  \left(E\left(z^{(m)}\right) \otimes E\left(z^{\prime (n)}\right)  \otimes \rho_0   \right)  \right)
&=&
S\left(
\Phi^{\otimes m} \otimes \Omega^{\otimes n} (\rho_0) \right) \\
&=&S_{\min} \left(\Phi^{\otimes m} \otimes \Omega^{\otimes n} \right)
=S_{\min} \left(\tilde\Phi^{\otimes m} \otimes \tilde\Omega^{\otimes n} \right)
\ee
Here, the last equality is from the concavity of $S(\cdot)$.
Therefore,
\be
\chi\left(\tilde\Phi^{\otimes m} \otimes \tilde\Omega^{\otimes n} \right)
= \log (k^m (k^{\prime})^n) - S_{\min} \left(\tilde\Phi^{\otimes m} \otimes \tilde\Omega^{\otimes n} \right)
\qquad \text{for} \quad \forall m,n \in \N\cup\{0\}
\ee
Indeed, the RHS is the upper bound for the Holevo capacity, which has been achieved by the ensemble. 
\end{proof}

\section{Concluding remark} 
In this paper, we developed concise proofs on additivity violation of minimum output entropy of quantum channels. 
In  regimes 
where dimensions of input and output are proportional to each other and
proportionally larger than or equal to square of dimension of environment,
we proved that asymptotically the violation is typical.   
Nevertheless, there are some interesting questions left. 
1)  Is the pair - a quantum channel and its complex conjugate - the best for the violation? 
2)  Is the violation a phenomenon for bipartite systems? 
 Through the project in \cite{CollinsFukudaNEchita2012}, I feel that the first question is true for the random quantum channels.  
For the second question, weak form of additivity is proven in \cite{Montanaro2013}.
 Also, Hastings conjectured in \cite{Hastings2009} that the additivity holds for quantum channels of the form $\Phi \otimes \bar\Phi$.
A positive mathematical evidence for this conjecture was found in \cite{FukudaNechita2012}. 
These results naively   suggest that additivity violation may be a concept for bipartite systems.
More researches should be done to answer these questions.

\appendix
\section{Results from asymptotic geometric analysis}
In this appendix, we collect results in asymptotic geometric analysis which we need. 
We refer interested readers to \cite{MilmanSchechtman1985}.

Let $X$ be a space with metric $\rho$ and Borel probability measure $\mu$.
Then, $(X_r,\rho_r,\mu_r)$ with $r \in \mathbb N$ is called a \emph{normal L\'evy family} 
with constants $c_1,c_2>0$ if
\be
1-\mu(A_r^{\varepsilon}) \leq c_1 \exp \{-c_2 \varepsilon^2 r\}
\ee  
for all $A_r^{\varepsilon}$ with  $\varepsilon >0$ and $r \in \mathbb N$. 
Here, $A_r^{\varepsilon} \subseteq X$ is defined for Borel sets $A_r\subseteq X_r$ with $\mu(A_r)\geq \frac{1}{2}$ 
in the following way:
\be \label{e-expand}
A_r^\varepsilon = \{x\in X_r : \rho (x,A_r) \leq \varepsilon\}
\ee
The unit spheres form a normal L\'evy family; see, for example,  2.2 of \cite{MilmanSchechtman1985}: 
\begin{thm} \label{MS1}
The unit sphere $S^{r+1} \subset  \mathbb R^{r+2}$ with 
the geodesic metric and the uniform measure is a normal L\'evy family with $c_1=\sqrt{\frac{\pi}{8}}$ and $c_2 = \frac{1}{2}$.  
\end{thm}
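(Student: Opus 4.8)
The plan is to run the classical argument via the spherical isoperimetric inequality, which is exactly the route of the cited reference. \textbf{First, reduce to a hemisphere.} By L\'evy's isoperimetric inequality on $S^{r+1}$ --- among all Borel sets of a prescribed measure, a geodesic cap has the smallest $\varepsilon$-neighborhood for every $\varepsilon>0$ --- one may replace any $A$ with $\mu(A)\ge\frac12$ by a geodesic cap of the same measure without decreasing $1-\mu(A^\varepsilon)$; and since $A\subseteq A'$ forces $A^\varepsilon\subseteq(A')^\varepsilon$, among such caps the one of smallest admissible measure, i.e.\ a hemisphere $H$ with $\mu(H)=\frac12$, is the worst case. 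So it suffices to bound $1-\mu(H^\varepsilon)$. In the geodesic metric, $H^\varepsilon$ is the cap of angular radius $\frac\pi2+\varepsilon$ about the pole of $H$, hence its complement is the antipodal cap $C$ of angular radius $\frac\pi2-\varepsilon$, and we are reduced to estimating $\mu(C)$.

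\textbf{Then compute and estimate.} Slicing $S^{r+1}\subset\R^{r+2}$ by the latitude angle $\phi$ from a pole, the slice at angle $\phi$ is an $r$-sphere of radius $\sin\phi$, so the normalized cap measure is obtained by integrating $\sin^r\phi$; taking angular radius $\frac\pi2-\varepsilon$, substituting $\phi=\frac\pi2-t$, and using the symmetry of $\sin^r$ about $\frac\pi2$,
\[
1-\mu(H^\varepsilon)=\mu(C)=\frac{\int_0^{\pi/2-\varepsilon}\sin^r\phi\,d\phi}{\int_0^\pi\sin^r\phi\,d\phi}=\frac{\int_{\varepsilon}^{\pi/2}\cos^r t\,dt}{2\,W_r},\qquad W_r:=\int_0^{\pi/2}\cos^r t\,dt .
\]
For the numerator, $\cos t\le e^{-t^2/2}$ on $[0,\frac\pi2]$ together with the elementary tail bound $\int_a^\infty e^{-s^2/2}\,ds\le\sqrt{\pi/2}\,e^{-a^2/2}$ (write $s=a+u$ with $a\ge0$ and drop $e^{-au}\le1$) yields $\int_{\varepsilon}^{\pi/2}\cos^r t\,dt\le r^{-1/2}\sqrt{\pi/2}\,e^{-r\varepsilon^2/2}$; for the denominator, the Wallis identity $W_rW_{r+1}=\frac{\pi}{2(r+1)}$ with $W_{r+1}\le W_r$ gives $W_r\ge\sqrt{\pi/(2(r+1))}$. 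Combining,
\[
1-\mu(A^\varepsilon)\ \le\ \frac{r^{-1/2}\sqrt{\pi/2}}{2\sqrt{\pi/(2(r+1))}}\,e^{-r\varepsilon^2/2}\ =\ \tfrac12\sqrt{1+\tfrac1r}\;e^{-r\varepsilon^2/2},
\]
and $\tfrac12\sqrt{1+1/r}\le\tfrac12\sqrt{\pi/2}=\sqrt{\pi/8}$ for $r\ge2$, which gives $c_1=\sqrt{\pi/8}$ and $c_2=\frac12$; the finitely many small $r$ are checked by hand, since there the cap area is a completely explicit elementary function of $\varepsilon$.

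\textbf{The main obstacle} is the reduction step: the spherical isoperimetric inequality is a genuine theorem (proved by symmetrization, originally due to L\'evy and Schmidt), and the exact value $\sqrt{\pi/8}$ is calibrated to its extremal sets being precisely caps, so it cannot be shortcut without relaxing the constant. If one only wanted concentration with unspecified constants, a semigroup or log-Sobolev argument would bypass isoperimetry entirely, at the cost of the sharp $c_1$. The rest is bookkeeping --- the Wallis lower bound and the Gaussian tail bound must each be tight enough to land the advertised constant --- together with the minor point that $A^\varepsilon$ must be understood in the geodesic metric so that $H^\varepsilon$ is genuinely a cap.
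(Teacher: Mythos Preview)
The paper does not give its own proof of this statement; it is quoted as a standard fact with a pointer to Section~2.2 of Milman--Schechtman. Your argument is precisely the classical proof that appears there: reduce to a hemisphere via the spherical isoperimetric inequality, then bound the residual cap mass by the Gaussian-type estimate coming from $\cos t\le e^{-t^2/2}$ together with the Wallis lower bound on $W_r$. The computation is correct and lands on the advertised constants for $r\ge 2$; your remark that the remaining small cases are elementary is fine, since the cap measure on $S^2$ is explicit. So your proposal matches the referenced proof, and there is nothing further to compare.
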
 
Based on this result, we state L\'evy's lemma \cite{Levy1951} in our view 
that behavior of the Lipschitz constant outside $A_r^\varepsilon$ does not matter:
\begin{thm}[L\'evy's Lemma in our view] \label{Levy}  
For $S_{\C^k \otimes \C^n} = S^{2kn-1} \subset \R^{2kn}$ with $k \in \N$ fixed, 
take a sequence of continuous functions $f_n: S_{\C^k \otimes \C^n} \rightarrow \mathbb R$  in the Hilbert-Schmidt norm,
and define $A_n=\{x \in S_{\C^k \otimes \C^n} : f_n(x) \leq \mathrm{med}(f_n) \}$.
Suppose there exist $\varepsilon>0$  and $L>0$  such that 
the Lipschitz constant of $f_n$ is upper-bounded by $L$ on $A_n^{\varepsilon} \setminus A^{\circ}$. 
Then,
\be \label{eq:Levy} 
\mu \left \{ x\in S_{\C^k \otimes \C^n}: 
f_n(x)  >  \mathrm{ med} (f_n) + \varepsilon L  \right\} \leq \sqrt{\frac{\pi}{8}} \exp \left\{ - \varepsilon^2  \left(kn - 1\right)\right\}
\ee 
\end{thm}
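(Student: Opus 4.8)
The plan is to reduce the statement to the spherical concentration estimate of Theorem \ref{MS1}, by showing that the super-level set $\{x : f_n(x) > \mathrm{med}(f_n) + \varepsilon L\}$ is contained in the complement of the $\varepsilon$-expansion $A_n^\varepsilon$ of the median sub-level set $A_n$. Throughout I would work with the geodesic metric $\rho$ on $S_{\C^k\otimes\C^n}$; since the chordal (Hilbert--Schmidt) distance is dominated by $\rho$, the hypothesis that the Lipschitz constant of $f_n$ is at most $L$ on $A_n^\varepsilon\setminus A^\circ$, where $A^\circ$ denotes the interior of $A_n$ — a bound which in the applications, e.g.\ in Theorem \ref{deviation}, is verified in the Hilbert--Schmidt norm — continues to hold with respect to $\rho$. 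I would also record the two elementary facts that $A_n=\{f_n\le\mathrm{med}(f_n)\}$ is closed, because $f_n$ is continuous, and that $\mu(A_n)\ge\tfrac12$, by the definition of the median, so that the concentration bound in \eqref{e-expand} applies to $A_n$.

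The core step is the inclusion
\[
\{x: f_n(x) > \mathrm{med}(f_n) + \varepsilon L\}\ \subseteq\ S_{\C^k\otimes\C^n}\setminus A_n^\varepsilon ,
\]
that is, every $x\in A_n^\varepsilon$ satisfies $f_n(x)\le\mathrm{med}(f_n)+\varepsilon L$. This is trivial if $x\in A_n$. If $x\in A_n^\varepsilon\setminus A_n$, then compactness of the sphere produces a point $y\in A_n$ with $\rho(x,y)=\rho(x,A_n)\le\varepsilon$, and such a nearest point must lie on $\partial A_n$ — an interior point of $A_n$ could be pushed slightly along the geodesic toward $x$ while staying in $A_n$, contradicting minimality. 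Hence both $x$ and $y$ lie in the shell $A_n^\varepsilon\setminus A^\circ$, so the Lipschitz hypothesis gives $|f_n(x)-f_n(y)|\le L\,\rho(x,y)\le\varepsilon L$; combined with $f_n(y)\le\mathrm{med}(f_n)$ this yields $f_n(x)\le\mathrm{med}(f_n)+\varepsilon L$. Equivalently one may run the Lipschitz estimate along the minimizing geodesic from $y$ to $x$: its relative interior avoids $A_n$ by minimality and is $\varepsilon$-close to $y$, hence lies in the shell.

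To finish, I would invoke Theorem \ref{MS1} with the identification $S_{\C^k\otimes\C^n}=S^{2kn-1}\subset\R^{2kn}$, so that $r+2=2kn$ and $c_2\varepsilon^2 r=\tfrac12\varepsilon^2(2kn-2)=\varepsilon^2(kn-1)$ with $c_1=\sqrt{\pi/8}$. Together with the inclusion above and $\mu(A_n)\ge\tfrac12$, this gives
\[
\mu\{x: f_n(x) > \mathrm{med}(f_n)+\varepsilon L\}\ \le\ 1-\mu(A_n^\varepsilon)\ \le\ \sqrt{\tfrac{\pi}{8}}\,\exp\{-\varepsilon^2(kn-1)\},
\]
which is exactly \eqref{eq:Levy}. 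I expect the only genuinely delicate point to be the bookkeeping around the shell $A_n^\varepsilon\setminus A^\circ$: one must make sure that the nearest point $y$, or the minimizing geodesic joining it to $x$, never enters the interior of $A_n$, where no Lipschitz control is assumed. This localization is precisely the feature that distinguishes the present formulation from the textbook L\'evy lemma, and it is what lets the applications get away with bounding the Lipschitz constant only in a neighborhood of the median sub-level set rather than everywhere.
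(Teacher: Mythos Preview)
Your proof is correct and follows essentially the same route as the paper's: establish the inclusion $\{f_n>\mathrm{med}(f_n)+\varepsilon L\}\subseteq S_{\C^k\otimes\C^n}\setminus A_n^\varepsilon$ and then invoke Theorem~\ref{MS1}. The only cosmetic difference is that you work in the geodesic metric and transfer the Lipschitz bound from Hilbert--Schmidt to geodesic, whereas the paper keeps $A_n^\varepsilon$ in the Hilbert--Schmidt metric and uses $\rho_{\mathrm{geo}}\ge\rho_{\mathrm{HS}}$ at the concentration step; your careful bookkeeping on the shell $A_n^\varepsilon\setminus A^\circ$ simply spells out what the paper compresses into the single line ``$x\in A_n^\varepsilon$ implies $f_n(x)\le\mathrm{med}(f_n)+\varepsilon L$.''
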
  
\begin{proof}
The proof is identical to the one for the usual L\'evy's lemma:
\be
\mu \left\{ x \in S_{\C^k \otimes \C^n}: f_n(x)  >   \mathrm{med} (f_n) + \varepsilon L  \right\} 
\leq \mu \left(S^{2kn-1} \setminus A_n^{\varepsilon} \right)  = 1 - \mu \left(A_n^{\varepsilon} \right)
\ee
Indeed, $x\in A_n^{\varepsilon}$ implies $f_n(x) \leq \mathrm{med}(f_n) + \varepsilon L$. 
Note that we switched metric from the geodesic distance to the Hilbert-Schmidt distance where the former is always larger than the latter. 
\end{proof}

\section*{Acknowledgment}
This research was initiated after the author gave a talk on additivity violation at the workshop
 ``Probabilistic Methods in Quantum Mechanics'' at Institut Camille Jordan of Lyon 1 in  France.  
The author thanks Beno\^it Collins for the invitation to AIMR of Tohoku University in Sendai Japan, where the author worked on this project 
and had useful discussions with him. 
Michael Wolf is thanked for his support. 
Roman Vershynin is thanked for useful references.
This research was financially supported by the CHIST-ERA/BMBF project CQC. 
Masahito Hayashi and an anonymous referee of Communications in Mathematical Physics gave useful comments on the first version. 

\bibliographystyle{alpha}
\bibliography{violation-net-bib} 

\newcommand{\etalchar}[1]{$^{#1}$}
\def\cprime{$'$}
\begin{thebibliography}{KMNR07}

\bibitem[ASW10]{AubrunSzarekWerner2010}
Guillaume Aubrun, Stanis{\l}aw Szarek, and Elisabeth Werner.
\newblock Nonadditivity of {R}\'enyi entropy and {D}voretzky's theorem.
\newblock {\em J. Math. Phys.}, 51(2):022102, 7, 2010.

\bibitem[ASW11]{AubrunSzarekWerner2011}
Guillaume Aubrun, Stanis{\l}aw Szarek, and Elisabeth Werner.
\newblock Hastings's additivity counterexample via {D}voretzky's theorem.
\newblock {\em Comm. Math. Phys.}, 305(1):85--97, 2011.

\bibitem[BCN13]{BCN2013}
Serban~T. Belinschi, Beno\^it Collins, and Ion Nechita.
\newblock Almost one bit violation for the additivity of the minimum output
  entropy.
\newblock {\em arXiv:1305.1567 [math-ph]}, 2013.

\bibitem[BH10]{BrandaoHorodecki2010}
Fernando G. S.~L. Brand{\~a}o and Micha{\l} Horodecki.
\newblock On {H}astings' counterexamples to the minimum output entropy
  additivity conjecture.
\newblock {\em Open Syst. Inf. Dyn.}, 17(1):31--52, 2010.

\bibitem[CFN12]{CollinsFukudaNEchita2012}
Beno{\^{\i}}t Collins, Motohisa Fukuda, and Ion Nechita.
\newblock Towards a state minimizing the output entropy of a tensor product of
  random quantum channels.
\newblock {\em J. Math. Phys.}, 53(3):032203, 20, 2012.

\bibitem[CHL{\etalchar{+}}08]{CHLMW2008}
Toby Cubitt, Aram~W. Harrow, Debbie Leung, Ashley Montanaro, and Andreas
  Winter.
\newblock Counterexamples to additivity of minimum output {$p$}-{R}\'enyi
  entropy for {$p$} close to 0.
\newblock {\em Comm. Math. Phys.}, 284(1):281--290, 2008.

\bibitem[CN10]{CollinsNechita2010}
Beno{\^{\i}}t Collins and Ion Nechita.
\newblock Random quantum channels {I}: graphical calculus and the {B}ell state
  phenomenon.
\newblock {\em Comm. Math. Phys.}, 297(2):345--370, 2010.

\bibitem[CN11]{CollinsNechita2011}
Beno{\^{\i}}t Collins and Ion Nechita.
\newblock Random quantum channels {II}: entanglement of random subspaces,
  {R}\'enyi entropy estimates and additivity problems.
\newblock {\em Adv. Math.}, 226(2):1181--1201, 2011.

\bibitem[Dud67]{Dudley1967}
R.~M. Dudley.
\newblock The sizes of compact subsets of {H}ilbert space and continuity of
  {G}aussian processes.
\newblock {\em J. Functional Analysis}, 1:290--330, 1967.

\bibitem[Dvo61]{Dvoretzky1961}
Aryeh Dvoretzky.
\newblock Some results on convex bodies and {B}anach spaces.
\newblock In {\em Proc. {I}nternat. {S}ympos. {L}inear {S}paces ({J}erusalem,
  1960)}, pages 123--160. Jerusalem Academic Press, Jerusalem, 1961.

\bibitem[FK10]{FK2010}
Motohisa Fukuda and Christopher King.
\newblock Entanglement of random subspaces via the {H}astings bound.
\newblock {\em J. Math. Phys.}, 51(4):042201, 19, 2010.

\bibitem[FKM10]{FKM2010}
Motohisa Fukuda, Christopher King, and David~K. Moser.
\newblock Comments on {H}astings' additivity counterexamples.
\newblock {\em Comm. Math. Phys.}, 296(1):111--143, 2010.

\bibitem[FLM77]{FigielMilman1977}
T.~Figiel, J.~Lindenstrauss, and V.~D. Milman.
\newblock The dimension of almost spherical sections of convex bodies.
\newblock {\em Acta Math.}, 139(1-2):53--94, 1977.

\bibitem[FN12]{FukudaNechita2012}
M.~Fukuda and I.~Nechita.
\newblock Asymptotically well-behaved input states do not violate additivity
  for conjugate pairs of random quantum channels.
\newblock {\em arXiv:1212.1630 [math-ph], accepted by Comm. Maths. Phys.},
  2012.

\bibitem[FW07]{FW2007}
Motohisa Fukuda and Michael~M. Wolf.
\newblock Simplifying additivity problems using direct sum constructions.
\newblock {\em J. Math. Phys.}, 48(7):072101, 7, 2007.

\bibitem[GHP10]{GHP2010}
Andrzej Grudka, Micha\l{} Horodecki, and \L{}ukasz Pankowski.
\newblock Constructive counterexamples to the additivity of the minimum output
  {R}\'enyi entropy of quantum channels for all {$p>2$}.
\newblock {\em J. Phys. A}, 43(42):425304, 7, 2010.

\bibitem[Has09]{Hastings2009}
M.B. Hastings.
\newblock Superadditivity of communication capacity using entangled inputs.
\newblock {\em Nature Physics}, 5:255, 2009.

\bibitem[HLW06]{HLW2006}
Patrick Hayden, Debbie~W. Leung, and Andreas Winter.
\newblock Aspects of generic entanglement.
\newblock {\em Comm. Math. Phys.}, 265(1):95--117, 2006.

\bibitem[Hol98]{Holevo1998}
A.~S. Holevo.
\newblock The capacity of the quantum channel with general signal states.
\newblock {\em IEEE Trans. Inform. Theory}, 44(1):269--273, 1998.

\bibitem[Hol05]{Holevo2005}
A.~S. Holevo.
\newblock On complementary channels and the additivity problem.
\newblock {\em Prob. Th. and Appl.}, 51:133--143, 2005.

\bibitem[Hol06]{Holevo2006}
Alexander~S. Holevo.
\newblock The additivity problem in quantum information theory.
\newblock In {\em International {C}ongress of {M}athematicians. {V}ol. {III}},
  pages 999--1018. Eur. Math. Soc., Z\"urich, 2006.

\bibitem[HW08]{HaydenWinter2008}
Patrick Hayden and Andreas Winter.
\newblock Counterexamples to the maximal {$p$}-norm multiplicity conjecture for
  all {$p>1$}.
\newblock {\em Comm. Math. Phys.}, 284(1):263--280, 2008.

\bibitem[JM78]{JainMarcus1978}
Naresh~C. Jain and Michael~B. Marcus.
\newblock Continuity of sub-{G}aussian processes.
\newblock In {\em Probability on {B}anach spaces}, volume~4 of {\em Adv.
  Probab. Related Topics}, pages 81--196. Dekker, New York, 1978.

\bibitem[KMNR07]{KMNR2007}
C.~King, K.~Matsumoto, M.~Nathanson, and M.~B. Ruskai.
\newblock Properties of conjugate channels with applications to additivity and
  multiplicativity.
\newblock {\em Markov Process. Related Fields}, 13(2):391--423, 2007.

\bibitem[KR01]{KingRuskai2001}
Christopher King and Mary~Beth Ruskai.
\newblock Minimal entropy of states emerging from noisy quantum channels.
\newblock {\em IEEE Trans. Inform. Theory}, 47(1):192--209, 2001.

\bibitem[L{\'e}v51]{Levy1951}
Paul L{\'e}vy.
\newblock {\em Probl\`emes concrets d'analyse fonctionnelle. {A}vec un
  compl\'ement sur les fonctionnelles analytiques par {F}. {P}ellegrino}.
\newblock Gauthier-Villars, Paris, 1951.
\newblock 2d ed.

\bibitem[Mil71]{Milman1971}
V.~D. Milman.
\newblock A new proof of {A}. {D}voretzky's theorem on cross-sections of convex
  bodies.
\newblock {\em Funkcional. Anal. i Prilo\v zen.}, 5(4):28--37, 1971.

\bibitem[Mon13]{Montanaro2013}
Ashley Montanaro.
\newblock Weak {M}ultiplicativity for {R}andom {Q}uantum {C}hannels.
\newblock {\em Comm. Math. Phys.}, 319(2):535--555, 2013.

\bibitem[MS86]{MilmanSchechtman1985}
Vitali~D. Milman and Gideon Schechtman.
\newblock {\em Asymptotic theory of finite-dimensional normed spaces}, volume
  1200 of {\em Lecture Notes in Mathematics}.
\newblock Springer-Verlag, Berlin, 1986.
\newblock With an appendix by M. Gromov.

\bibitem[Pis89]{PisierBook}
Gilles Pisier.
\newblock {\em The volume of convex bodies and {B}anach space geometry},
  volume~94 of {\em Cambridge Tracts in Mathematics}.
\newblock Cambridge University Press, Cambridge, 1989.

\bibitem[Sch89]{Schechtman1989}
Gideon Schechtman.
\newblock A remark concerning the dependence on {$\epsilon$} in {D}voretzky's
  theorem.
\newblock In {\em Geometric aspects of functional analysis (1987--88)}, volume
  1376 of {\em Lecture Notes in Math.}, pages 274--277. Springer, Berlin, 1989.

\bibitem[Sho04]{Shor2004}
Peter~W. Shor.
\newblock Equivalence of additivity questions in quantum information theory.
\newblock {\em Comm. Math. Phys.}, 246(3):453--472, 2004.

\bibitem[Sti55]{Stinespring1955}
W.~Forrest Stinespring.
\newblock Positive functions on {$C^*$}-algebras.
\newblock {\em Proc. Amer. Math. Soc.}, 6:211--216, 1955.

\bibitem[SW97]{SchumacherWestmoreland1997}
B.~Schumacher and M.~D. Westmoreland.
\newblock Sending classical information via noisy quantum channels.
\newblock {\em Phys. Rev. A}, 56(1):131--138, 1997.

\bibitem[WH02]{WernerHolevo2002}
R.~F. Werner and A.~S. Holevo.
\newblock Counterexample to an additivity conjecture for output purity of
  quantum channels.
\newblock {\em J. Math. Phys.}, 43(9):4353--4357, 2002.
\newblock Quantum information theory.

\end{thebibliography}

\end{document}